\theoremstyle{plain}
\newtheorem{thm}{\protect\theoremname}
\theoremstyle{plain}
\newtheorem{lem}{\protect\lemmaname}
\theoremstyle{remark}
\newtheorem{rem}{\protect\remarkname}
\theoremstyle{plain}
\newtheorem{prop}{\protect\propositionname}
\theoremstyle{plain}
\theoremstyle{plain}
\newtheorem{claim}{\protect\claimname}
\def \co{ {\rm co\,} }
\def \cav {\textrm{cav\,}}
\providecommand{\lemmaname}{Lemma}
\providecommand{\propositionname}{Proposition}
\providecommand{\remarkname}{Remark}
\providecommand{\theoremname}{Theorem}
\providecommand{\corollaryname}{Corollary}
\providecommand{\claimname}{Claim}
\begin{document}
\title{Cross-verification and Persuasive Cheap Talk\thanks{We are grateful for the helpful comments of Navin Kartik, Elliot Lipnowski and Joel Sobel. Ludovic Renou gratefully acknowledges the support of the Agence Nationale pour la Recherche under grant ANR CIGNE (ANR-15-CE38-0007-01) and through the ORA Project ``Ambiguity in Dynamic Environments'' (ANR-18-ORAR-0005). Atakan's work on this project was supported by a grant from the European Research Council (ERC 681460, InformativePrices).}}
\author{Alp Atakan\thanks{Koc University and QMUL. } \,\& Mehmet Ekmekci\thanks{ Boston College}
\,\& Ludovic Renou \thanks{QMUL and CEPR}}

\maketitle
%\date{\today}
\begin{abstract}

We study a cheap-talk game where two experts first choose what information to acquire and then offer advice to a decision-maker whose actions affect the welfare of all. The experts cannot commit to reporting strategies. Yet, we show that the decision-maker's ability to cross-verify the experts' advice acts as a commitment device for the experts. We prove the existence of an equilibrium, where an expert's equilibrium payoff is equal to what he would obtain if he could commit to truthfully revealing his information.
\bigskip{}

\textbf{Keywords}: Bayesian persuasion, information design, commitment,
cheap talk, multiple experts

\textbf{JEL Classification Numbers}: C73, D82
\end{abstract}

\newpage

\section{Introduction}
Decision-makers routinely solicit advice from experts who have a vested interest in the decision at hand. Consulting multiple experts may allow a decision-maker to check the veracity of the advice that he receives by comparing one expert's recommendation with another's ("cross-verification"). In this paper, we study how cross-verification affects communication.

\medskip

Cross-verification's effectiveness depends on the experts' information. If experts have perfectly correlated information, then inconsistent recommendations from experts
definitively indicate untruthful, self-serving advice. Alternatively, if experts have uncorrelated information, then cross-verification cannot detect misleading advice. Thus, if the experts strategically acquire information,  their choices will affect the scope for cross-verification. This paper sheds light on this interplay by analyzing a cheap-talk game, where experts independently acquire information before
providing advice to a decision-maker. More precisely, we study the following game: Two experts with
identical preferences, first choose statistical experiments that provide
information about an unknown state of the world, privately observe their experiments' outcomes, and then offer private
reports to the decision-maker. The decision-maker collects
all the reports and chooses an action. \medskip

As a benchmark, suppose that an expert could commit to revealing his experiment's outcome truthfully. Following  \citet{kamenica2011bayesian}, we call the experiment that this expert would optimally select the expert-optimal experiment. In our model, however, the experts
\emph{cannot} commit. Yet, we show that there exists an equilibrium, where both experts
choose the expert-optimal experiment and truthfully report
the outcomes of their experiments. In equilibrium, the experts optimally select perfectly correlated experiments, which enable cross-verification to be most effective. In turn,  cross-verification facilitates truthful communication and allows
the experts to receive their best possible payoff. In other words, cross-verification acts as a commitment device.\medskip

The existence of such an equilibrium relies on three essential properties.
First, we assume that the experts are free to choose arbitrarily correlated statistical
experiments (see \citet{green1978two} and \citet{gentzkow2016competition,gentzkow2017bayesian}). In fact in the equilibrium that we construct, they choose to correlate
their experiments' outcomes perfectly and thus allow the decision-maker
to cross-verify their reports perfectly.
Second, suppose one expert deviates from reporting the experiment's outcome 
truthfully, while the other is truthful. In this case, the decision-maker detects
a deviation as the two reports are inconsistent.
However, the decision-maker cannot deduce the deviator's identity.
Third, we show that a \emph{uniform} punishment always exists. There is an action that punishes the experts for deviating from
truthful reporting, irrespective of the experts'
private information. \medskip

The existence of the aforementioned uniform punishment is key to our
equilibrium construction since the decision-maker does not know the
deviator's identity and, therefore, cannot condition the punishment on
the deviator's information. Proving the existence of a uniform punishment is the main technical contribution
of the paper. We stress that the uniform
punishment is relative to the expert-optimal experiment. Arbitrary
experiments do not necessarily admit uniform punishments, and therefore,
cross-verification does not necessarily elicit honest advice when the experts choose arbitrary experiments.\medskip{}

Our main result, described above, also generalizes to situations
where the experts have non-identical preferences, provided that a
uniform punishment continues to exist. In particular, we show that
there is a uniform punishment when the preferences of the second expert
are a convex combination of the preferences of the first expert and
the decision-maker. For example, this is the case in the 
quadratic utility example of \citet{crawford1982strategic} when the two experts
have like-biases. \medskip

Finally, we also study cross-verification from the decision-maker's perspective. We show that there is an equilibrium where the
decision-maker benefits from cross-verification if the expert-optimal
experiment is informative at some prior belief.  The intuition is as follows:  The decision-maker benefits from any additional information, and even the expert-optimal experiment provides
valuable information in many circumstances. If the expert-optimal experiment does not provide
useful information to the decision-maker, we appropriately modify
the expert-optimal experiment. The modified experiment offers
valuable information for the decision-maker, and the experts can truthfully
communicate this information in equilibrium. We also establish this result's converse: the decision-maker's unique equilibrium payoff is equal
to his payoff at his prior belief if the expert-optimal experiment is uninformative at every prior belief. In other
words, the decision-maker only benefits from cross-verification in
situations where the experts also benefit.\medskip{}

\textbf{\textit{Related literature.}} This paper is related
to the literature on cheap talk pioneered by \citet{crawford1982strategic}
and several papers in this literature study communication with
multiple experts. In particular, \citet{krishna2001asymmetric} focus
on a model where the experts are perfectly informed and show that
there is an equilibrium where the experts truthfully reveal the state
if the experts send messages simultaneously. In contrast, \citet{krishna2001model}
prove that such an equilibrium does not exist if the experts send
messages sequentially.  \citet{battaglini2002multiple} shows that the decision-maker can learn a multidimensional state by consulting experts about different dimensions. \citet{ambrus2014almost} find that there are equilibrium outcomes of multi-sender cheap-talk games that are arbitrarily close to full revelation, if the senders imperfectly observe the state and if the state space is large enough.\footnote{Also, see \citet{wolinsky2002eliciting},
and \citet{gilligan-krehbiel} for related work on multi-sender cheap-talk games.} Our work differs from these articles in several respects: Foremost, our main result shows that the experts obtain their commitment
payoff. In contrast, the cheap-talk literature is predominantly interested in full information revelation. In other words, our emphasis is on the experts' perspective while the cheap-talk literature focuses on the decision-maker's perspective. Second, we assume that the experts choose what kind of
information to acquire, while the previous papers typically assume that
the experts perfectly know the state. This is an important distinction
since the experts' information affects the scope for cross-verification.
Third, the literature on cheap talk focuses on agents with single-peaked preferences and frequently assumes that all agents have quadratic utility.
In contrast, we put no restrictions on the utility functions.\footnote{With quadratic utility and like-biased experts, the expert-optimal
information structure coincides with the decision-maker's and entails choosing the perfectly informative experiment.
Therefore, as in  \citet{krishna2001asymmetric}, our result also implies that full information revelation
is an equilibrium in this particular case. However, with other utility specifications,
the expert-optimal and decision-maker optimal  information structures need not coincide.} \medskip

The
survey by \citet{sobel2013giving} also discusses how cross-verification
ensures truth-telling in the context of multi-sender
cheap-talk games. The argument provided in this survey relies on the
existence of an arbitrarily harsh exogenously-given punishment for
deviations from truthful reporting. Instead, we show that a uniform
punishment relative to the optimal experiment always exists.\medskip

Our paper is also related to the following works that focus on single-expert cheap-talk games: \citet{lyu2020information} characterizes the equilibrium set in a model where the expert acquires information before providing advice. \citet{lipnowski2020equivalence} shows that an expert can obtain his commitment payoff if the expert's value function is continuous.\footnote{The value function describes the expert's highest expected payoff at a given belief conditional on the decision-maker choosing a best-reply to that belief. Continuity of the value function is a strong assumption. E.g.,  with two states and two actions, it requires the expert to be indifferent between the two actions whenever the decision-maker is.} Instead, we focus on a model with multiple experts and show that the experts receive their commitment payoff, without making any assumptions on their payoff functions. \medskip

Finally, this paper is closely related to the literature on Bayesian persuasion
(\citet{kamenica2011bayesian}). A number of articles that include \citet{au2020competitive},
\citet{gentzkow2016competition,gentzkow2017bayesian}, \citet{koessler2018interactive}
and \citet{li2018bayesian,li2018sequential} study persuasion with
multiple experts. In all of these papers, the experts can commit to
revealing their information truthfully. In contrast, we assume that the experts' recommendations are cheap-talk, i.e., we require sequential
rationality at every stage of the game. Our result
shows that the experts can achieve their commitment payoff even though
they cannot commit to revealing their information.
For a recent survey of the literature on Bayesian persuasion, we refer to \citet{kamenica2019bayesian}.

\section{The Model}

We study a cheap-talk game between two experts, labelled 1 and 2,
and a decision-maker. The experts provide information to the decision-maker
about a payoff-relevant state $\omega\in\Omega$, who then chooses
an action $a\in A$. The sets $A$ and $\Omega$ are finite. The
experts have identical preferences. An expert's payoff is $u(a,\omega)$
when the decision-maker chooses action $a$ and the state is $\omega$.
(We relax the identical preferences assumption in the next section.)
The decision-maker's payoff is $v(a,\omega)$. Initially, neither
the experts nor the decision-maker knows the state. The common prior
probability that the state is $\omega$ is $\pi^{\circ}(\omega)$.
\medskip

We first provide an informal description of the cheap-talk game. The
game has three stages. In the first stage, the two experts simultaneously
choose a statistical experiment. The selected experiments are publicly
observed. In the second stage, each expert privately observes his experiment's outcome and then sends a message to the decision-maker.
In the third stage, the decision-maker observes the experts' messages
and chooses an action.\medskip{}

We now provide a formal description. To model the choice of statistical
experiments, we follow \citet{gentzkow2016competition,gentzkow2017bayesian}.
These authors define a statistical experiment $\sigma$ as a partition
of $\Omega\times[0,1]$ into finitely many (Lebesgue) measurable subsets. A signal $s$ is an element of the partition
$\sigma$, i.e., a measurable subset of $\Omega\times[0,1]$. The
probability of signal $s\in\sigma$ conditional on $\omega$ is the
(Lebesgue) measure of the set $\{x\in[0,1]:(\omega,x)\in s\}$. Throughout,
we omit the dependence on the experiment $\sigma$, and write $\lambda_{s}$
for the probability of the signal $s$ and $\pi_{s}$ for the posterior
probability. We denote the set of experiments that the experts can
choose from by $\Sigma$.\medskip{}

In the first stage, expert $i$ thus chooses an experiment $\sigma_{i}\in\Sigma$.
The chosen experiments $(\sigma_{1},\sigma_{2})$ are publicly observed.
In the second stage, expert $i$ privately observes the realization
$s_{i}\in\sigma_{i}$ and sends a private message $m_{i}\in M_{i}$
to the decision-maker. We assume that the sets of messages are rich
enough to communicate any signal realizations. Finally, the decision-maker
observes the messages $(m_{1},m_{2})$ (but not the realized signals
$(s_{1},s_{2})$) and chooses an action $a$. We denote $\Gamma(\pi^{\circ},u,v)$
the cheap-talk game. Note that different extensive-form games are
consistent with our description. Throughout, we assume that the state
$(\omega,x)\in\Omega\times[0,1]$ is chosen by Nature according to
the probability distribution $\pi^{\circ}\times U[0,1]$ after the
experts have chosen their experiments where $U[0,1]$ denotes the uniform distribution on the unit interval. Thus, we have a proper sub-game
after each choice of statistical experiments $(\sigma_{1},\sigma_{2})$.\medskip{}

A strategy for expert $i$ is a pair $(\sigma_{i},\tau_{i})$, where
$\sigma_{i}\in\Sigma$ and $\tau_{i}(\sigma_{i},\sigma_{j},s_{i})\in\Delta(M_{i})$
for all $(\sigma_{i},\sigma_{j},s_{i})$ with $s_{i}\in\sigma_{i}$.
A strategy for the decision-maker specifies a mixed action $\alpha(\sigma_{i},\sigma_{j},m_{i},m_{j})\in\Delta(A)$
for all $(\sigma_{i},\sigma_{j},m_{i},m_{j})$.\footnote{To ease exposition, we do not explicitly consider randomizations over
the choices of experiments. This does not
affect any of our results.} The solution concept is weak perfect Bayesian equilibrium. We stress
that this requires the beliefs to be consistent with the chosen experiments
$(\sigma_{1},\sigma_{2})$ even if these experiments are off the equilibrium
path.

\medskip{}

Few remarks are worth making. First, as in classical cheap-talk games,
none of the experts can commit to reporting strategies.
Second, if the experiments are $(\sigma_{1},\sigma_{2})$, then the
joint probability of $(s_{1},s_{2})\in\sigma_{1}\times\sigma_{2}$
conditional on $\omega$ is the measure of the set $\{x:(\omega,x)\in s_{1}\cap s_{2}\}$.
Thus, if both experts choose the same experiment $\sigma$, then the
probability of $(s,s')\in\sigma\times\sigma$ is zero, whenever $s\neq s'$.
(To see this, note that if $s\neq s'$, then $s\cap s'=\emptyset$
since $\sigma$ is a partition.) In words, if both experts choose
the same experiment, their realized signals are perfectly correlated.
This property will turn out to be crucial.\footnote{Note, however, that we can allow for the experts to choose identical
and independent experiments without affecting our results. To do so,
it suffices to define an experiment as a finite partition of $\Omega\times[0,1]\times[0,1]$,
with $(\omega,x,y)$ distributed according to $\pi^{\circ}\times U([0,1])\times U([0,1])$.
Intuitively, if the experts condition their random observations on
$x$, they are perfectly correlated, while they are independent if
one expert conditions on $x$ and the other on $y$.} An alternative modeling is to assume that there is a fixed set of
statistical experiments and let the experts observe the realization
of the experiment of their choices. This alternative modeling also
implies that if the two experts choose to observe the same experiment's realization, their observations are identical. Lastly,
it is usual to model statistical experiments as probability kernels
$\sigma^{*}:\Omega\rightarrow\Delta(S)$, where $S$ is the (finite)
set of signals. The latter formulation naturally implies the former:
for each $\omega$, we can partition $[0,1]$ into $|S|$ non-empty
and disjoint intervals such that the length of the $s$-th interval
is $\sigma^{*}(s|\omega)$ when the state is $\omega$. With a slight
abuse of notation, we identify the probability kernel $\sigma^{*}$
with that particular partition of $\Omega\times[0,1]$. \medskip{}

We focus on \textbf{truthful equilibria}, in which
the two experts choose the same experiment in the first stage and
truthfully report the common signal realization in the second stage.\medskip{}

In what follows, we denote by $v(\alpha,\pi)$
 the decision-maker's expected payoff when he chooses the mixed
action $\alpha$ and his belief is $\pi$, and by $BR(\pi):=\{\alpha\in\Delta(A):v(\alpha,\pi)\geq v(\alpha',\pi),\forall\alpha'\in\Delta(A)\}$
 the set of decision-maker's best-replies at $\pi$. Similarly,
we write $u(\alpha,\pi)$ for an expert's expected payoff.

\section{The Main Result}

In this section, we show that the ability of the decision-maker to cross-verify information
serves as a commitment device for the experts. More precisely, we show
that there exists an equilibrium of the cheap-talk game, in
which the experts obtain their \emph{commitment value}.\medskip{}

We define the commitment value as the highest payoff an expert can obtain when he commits to truthfully disclose the realized signal, as in games of Bayesian persuasion. Formally, consider the persuasion game, where an expert first chooses
a statistical experiment $\sigma:\Omega\rightarrow\Delta(S)$, \emph{commits}
to truthfully reveal the realized signal $s$ to the decision-maker, who then
makes a decision. \citet{kamenica2011bayesian} prove
that the best equilibrium payoff for the expert in this game is given
by $\text{cav}\;\overline{u}(\pi)$, where $\text{cav}\;\overline{u}$
is the concavification of $\overline{u}$ and $\overline{u}(\pi):=\max_{\alpha\in BR(\pi)}u(\alpha,\pi)$. (See also \citealp{AumannMaschler95}.) 
 For later reference, we write $(\lambda^{*}_s,\pi_{s}^{*})_{s\in S}$ for an optimal splitting of the prior $\pi^{\circ}$, that is,  $\sum_{s\in S}\lambda^{*}_s\overline{u}(\pi_s^{*})=\text{cav}\,\overline{u}(\pi^{\circ})$ and
$\sum_{s\in S}\lambda^{*}_s\pi_{s}^{*}=\pi^{\circ}$. We write $\Pi^{*}$ for $\{\pi_{s}^{*}:s\in S\}$,
$\mathrm{co\,}\Pi^{*}$ for the convex hull of $\Pi^{*}$, and 
$\Delta^{*}$ for  the set of all probability distributions 
over $\Pi^{*}$. The corresponding optimal experiment is denoted $\sigma^*$.

\medskip

\begin{thm}
There exists a truthful equilibrium of the cheap-talk game, where both experts obtain their commitment value 
$\text{cav\;}\overline{u}(\pi^{\circ})$.
\end{thm}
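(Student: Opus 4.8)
The plan is to exhibit the natural truthful profile and reduce its verification to a single existence statement, the uniform punishment, which I expect to be the only genuinely hard part.

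\emph{The candidate profile.} Both experts choose the optimal experiment $\sigma^{*}$ and report their realized signal truthfully. After two consistent reports $s$, the decision-maker holds the belief $\pi_{s}^{*}$ and plays a best reply $\alpha_{s}\in BR(\pi_{s}^{*})$ selected so that $u(\alpha_{s},\pi_{s}^{*})=\overline{u}(\pi_{s}^{*})$; such an $\alpha_{s}$ exists by definition of $\overline{u}$, and picking it is optimal for the decision-maker since every element of $BR(\pi_{s}^{*})$ gives him the same payoff. Along this path each expert obtains $\sum_{s}\lambda_{s}^{*}\overline{u}(\pi_{s}^{*})=\text{cav}\,\overline{u}(\pi^{\circ})$, the commitment value. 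It then remains to specify off-path play and rule out profitable deviations in (i) the reporting stage and (ii) the experiment stage. As a sanity check on optimality, note the general martingale bound: in any continuation, the posteriors induced by the messages average to the prior, the decision-maker best-replies, and so the expert's payoff is at most $\mathbb{E}[\overline{u}(\pi_{\mathrm{msg}})]\le\text{cav}\,\overline{u}(\pi^{\circ})$ by Jensen; hence no deviation can ever beat the commitment value, and I only need to deter those that merely tie it in a way that breaks the construction.

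\emph{Experiment deviations are easy.} Because the chosen experiments are publicly observed, a deviation to any $\sigma_{i}\neq\sigma^{*}$ lands the players in an off-path subgame that is common knowledge. There I specify a babbling continuation: the decision-maker ignores the messages, keeps the belief $\pi^{\circ}$ (consistent), and plays some $\alpha^{\circ}\in BR(\pi^{\circ})$, while the experts randomize arbitrarily. This is a weak perfect Bayesian equilibrium of the subgame, and the deviator earns $u(\alpha^{\circ},\pi^{\circ})\le\overline{u}(\pi^{\circ})\le\text{cav}\,\overline{u}(\pi^{\circ})$, i.e.\ no more than on path. So experiment deviations are never strictly profitable, and the entire theorem collapses onto the reporting stage.

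\emph{Reporting deviations and the reduction to a uniform punishment.} In the subgame $(\sigma^{*},\sigma^{*})$ the signals are perfectly correlated, so if one expert is truthful any lie by the other produces two inconsistent reports; the decision-maker detects a deviation but cannot identify the deviator, and consistency with $\sigma^{*}$ forces his off-path belief to lie in $\mathrm{co}\,\Pi^{*}$. I therefore need a \emph{uniform punishment}: a belief $\pi^{P}\in\mathrm{co}\,\Pi^{*}$ and an action $\alpha^{P}\in BR(\pi^{P})$ with $u(\alpha^{P},\pi_{s}^{*})\le\overline{u}(\pi_{s}^{*})$ for every $s$. Granting this, an expert with signal $s$ gets $\overline{u}(\pi_{s}^{*})$ from honesty and at most $u(\alpha^{P},\pi_{s}^{*})\le\overline{u}(\pi_{s}^{*})$ from any lie, so truth-telling is a best reply at every signal and sequential rationality of $\alpha^{P}$ holds by construction. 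To locate $\pi^{P},\alpha^{P}$ I would exploit the optimality of $\sigma^{*}$: concavification supplies an affine $\ell$ with $\ell\ge\overline{u}$ everywhere, $\ell(\pi_{s}^{*})=\overline{u}(\pi_{s}^{*})$ for $s$ in the support, and $\ell=\text{cav}\,\overline{u}$ on $\mathrm{co}\,\Pi^{*}$. Since $u(\alpha^{P},\cdot)$ and $\ell$ are affine and $\Pi^{*}$ contains the extreme points of $\mathrm{co}\,\Pi^{*}$, the desired inequalities are equivalent to the single global bound $u(\alpha^{P},\pi)\le\ell(\pi)$ for all $\pi\in\mathrm{co}\,\Pi^{*}$.

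\emph{The main obstacle.} The difficulty is to pass from an \emph{average} bound to a \emph{uniform} one while keeping $\alpha^{P}$ a credible best reply. Concavification alone only yields the average statement: choosing the expert-worst reply $\alpha^{\circ}\in\arg\min_{\alpha\in BR(\pi^{\circ})}u(\alpha,\pi^{\circ})$ gives $\sum_{s}\lambda_{s}^{*}u(\alpha^{\circ},\pi_{s}^{*})=u(\alpha^{\circ},\pi^{\circ})\le\text{cav}\,\overline{u}(\pi^{\circ})=\sum_{s}\lambda_{s}^{*}\overline{u}(\pi_{s}^{*})$, so the punishment works on the $\lambda^{*}$-average but may over-reward some signal $s$ and under-reward another. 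Closing this gap is exactly where the freedom to place $\pi^{P}$ anywhere in $\mathrm{co}\,\Pi^{*}$, rather than at $\pi^{\circ}$, must be used. My plan is a fixed-point argument on $\Delta^{*}$ (via Kakutani/Glicksberg) that pairs the decision-maker's best-reply correspondence over $\mathrm{co}\,\Pi^{*}$ with a ``most-tempted-signal'' response that shifts belief weight toward whichever $\pi_{s}^{*}$ currently violates $u(\alpha,\pi_{s}^{*})\le\ell(\pi_{s}^{*})$; a fixed point should then balance the temptations across all signals, and the optimality of $\sigma^{*}$ (through $\ell$) is what forces the balanced point to satisfy the bound at every $s$ simultaneously. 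Establishing that this fixed point exists and delivers the uniform inequality—reconciling the best-reply requirement with domination by $\ell$ at every vertex—is, I expect, the crux of the whole argument and the single step that will demand the most care.
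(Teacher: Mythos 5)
Your reduction of Theorem 1 to a uniform punishment is exactly the paper's own reduction: the same truthful profile, the same on-path selection of $\alpha_{s}\in BR(\pi_{s}^{*})$ attaining $\overline{u}(\pi_{s}^{*})$, the same dismissal of experiment deviations via a babbling continuation, the same detection argument from perfect correlation, and the same target statement --- a belief $\pi_{p}\in\mathrm{co}\,\Pi^{*}$ and $\alpha_{p}\in BR(\pi_{p})$ with $u(\alpha_{p},\pi_{s}^{*})\le\overline{u}(\pi_{s}^{*})$ for every $s$. The genuine gap is that you never prove this statement. It is precisely the paper's Lemma 1, which the authors single out as the main technical contribution of the paper, and your treatment of it is an announced plan (``my plan is a fixed-point argument\ldots'', ``a fixed point should then balance the temptations\ldots'') rather than an argument: no correspondence is defined, no fixed-point theorem's hypotheses are verified, and the inequality that must hold at the fixed point is never derived. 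As written, the proposal establishes only the easy implication (uniform punishment $\Rightarrow$ theorem) and stops exactly where the difficulty begins. (A separate minor inaccuracy: under weak perfect Bayesian equilibrium, the belief after the zero-probability event of conflicting reports is not \emph{forced} into $\mathrm{co}\,\Pi^{*}$; it is unrestricted, and the construction simply \emph{chooses} it there so that $\alpha_{p}$ is sequentially rational.)

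For what it is worth, your sketch can be completed, and along a route genuinely different from --- and arguably simpler than --- the paper's. Keep your affine $\ell$ with $\ell\ge\overline{u}$ everywhere and $\ell(\pi_{s}^{*})=\overline{u}(\pi_{s}^{*})$ on the support (this is the paper's Claim 1). On the compact convex set $\Delta^{*}\times\Delta(A)$ define $\Phi(\lambda,\alpha):=\Lambda(\alpha)\times BR(\pi_{\lambda})$, where $\pi_{\lambda}:=\sum_{s}\lambda_{s}\pi_{s}^{*}$ and $\Lambda(\alpha):=\arg\max_{\lambda'\in\Delta^{*}}\sum_{s}\lambda'_{s}\bigl(u(\alpha,\pi_{s}^{*})-\overline{u}(\pi_{s}^{*})\bigr)$. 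Both factors are upper hemicontinuous with nonempty convex compact values, so Kakutani gives a fixed point $(\overline{\lambda},\overline{\alpha})$, and then
\[
\max_{s}\bigl(u(\overline{\alpha},\pi_{s}^{*})-\overline{u}(\pi_{s}^{*})\bigr)
=\sum_{s}\overline{\lambda}_{s}\bigl(u(\overline{\alpha},\pi_{s}^{*})-\overline{u}(\pi_{s}^{*})\bigr)
=u(\overline{\alpha},\pi_{\overline{\lambda}})-\ell(\pi_{\overline{\lambda}})
\le\overline{u}(\pi_{\overline{\lambda}})-\ell(\pi_{\overline{\lambda}})\le 0,
\]
where the first equality is the fixed-point property of $\Lambda$ (a linear objective on $\Delta^{*}$ is maximized at a vertex), the second uses linearity of $u(\overline{\alpha},\cdot)$ and affinity of $\ell$ with $\ell(\pi_{s}^{*})=\overline{u}(\pi_{s}^{*})$, the first inequality uses $\overline{\alpha}\in BR(\pi_{\overline{\lambda}})$, and the last uses $\overline{u}\le\ell$. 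This yields the uniform punishment with $\pi_{p}=\pi_{\overline{\lambda}}$ and $\alpha_{p}=\overline{\alpha}$. Note that this is \emph{not} how the paper proceeds: the paper argues by contradiction, invokes linear-programming duality (its Claim 2) to show a correspondence $F:\Delta^{*}\rightarrow\Delta^{*}$ is nonempty-valued, and then must appeal to a fixed-point theorem for \emph{lower} hemicontinuous correspondences (Border, Theorem 15.4), since its $F$ is not upper hemicontinuous; your two-coordinate construction sidesteps both the duality step and the lower-hemicontinuity issue. But these few lines \emph{are} the proof of the theorem's hard part, and they are absent from what you submitted.
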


Before proving Theorem 1, we explain our result's logic with the help of a simple example. There are two states, $\omega_0$ and $\omega_1$, and four actions,  $a_{0},a_{L},a_{R}$ and $a_{1}$. The preferences
are depicted in Figure \ref{ex:mixed strategy punishment}. Throughout the example, probabilities refer to the probability of $\omega_1$. Assume that $\pi^{\circ}=0.45$.

\begin{figure}[h]
\subfloat[{DM's Preferences. This figure depicts $v(a,\pi)$ as a function of
$\pi=\Pr[\omega=\omega_{1}]$. Action $a_{0}$ is optimal for the
DM for $\pi\in[0,0.3]$, action $a_{L}$ is optimal for $\pi\in[0.3,0.4]$,
$a_{R}$ is optimal for $[0.4,0,6]$, and $a_{1}$ is optimal for
$[0.6,1]$.}]{\includegraphics[scale=0.9]{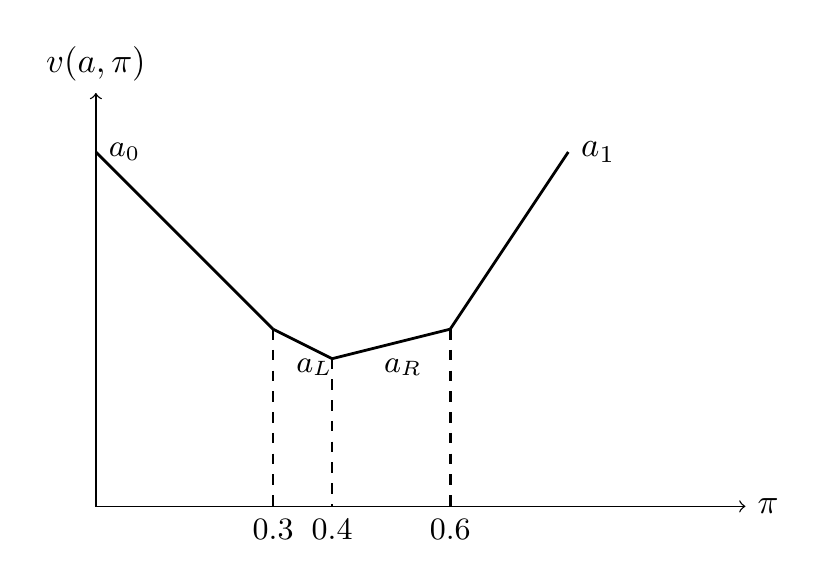}} \qquad{}\subfloat[{Expert's Preferences. The black dashed lines depict $u(a,\pi)$ as a function
of $\pi$. The solid black lines depict $\overline{u}(\pi)$, the
solid blue line depicts $\text{cav }\overline{u}(\pi)$ for $\pi\in[0,3,0.6]$,
and $\text{cav }\overline{u}(\pi)$ coincides with $\overline{u}(\pi)$
for $\pi\protect\notin[0,3,0.6]$. The red dashed line depicts the
payoff to the mixed action $\alpha_{p}\in\Delta(\{a_{L},a_{R}\})$. The mixed action $\alpha_p$ is the uniform punishment; it is a best response for the DM to belief $\pi_{p}=0.4$.}]{\includegraphics[scale=0.9]{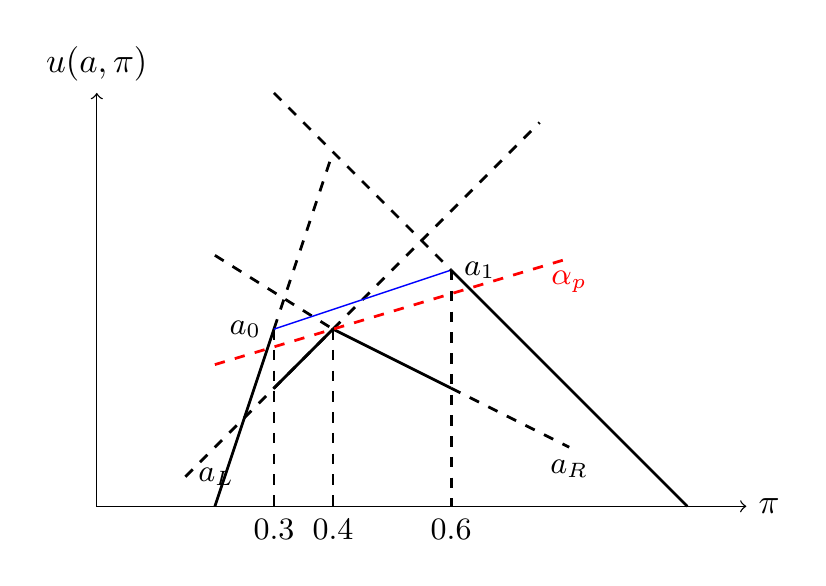}}

\caption{Uniform Punishment and Cross-verification.}
\label{ex:mixed strategy punishment}
\end{figure}
\medskip

We first note that the optimal experiment $\sigma^*$ consists in splitting the prior into the posteriors 
posteriors $\pi_{s_{0}}^{*}=0.3$ and
$\pi_{s_{1}}^{*}=0.6$.\footnote{We have $\lambda^{*}_{s_{0}}=\lambda^{*}_{s_{1}}=1/2$. The experiment is given by: $\sigma^*(s_{0}\vert\omega_{0})=0.64$
and $\sigma^*(s_1|\omega_1)= 0.67$.} We also note that $u(a_0,\pi_{s_0}) < u(a_1,\pi_{s_0})$ and $u(a_1,\pi_{s_1}) < u(a_0,\pi_{s_1})$, that is, the experts have an incentive to mis-report the realized signals. Thus, if there was a single expert, choosing the experiment $\sigma^*$ and truthfully reporting the realized signal would not be an equilibrium. More generally, no equilibrium would give the expert his commitment value. \medskip

Matters are different if the decision-maker chooses to consult  another expert.  To see this, suppose that the two experts choose the experiment $\sigma^*$ and truthfully report the outcome of the experiment.  The decision-maker then holds belief
0.3 (resp., 0.6) and plays action $a_{0}$ (resp., $a_{1}$) after observing
two matching messages equal to $s_{0}$ (resp., $s_{1}$). 
Off the equilibrium path, i.e., when  the decision-maker observes
two contradictory messages, assume that he holds belief $\pi_{p}=0.4$ and
plays action $\alpha_{p}\in\Delta(\{a_{L},a_{R}\})=BR(0.4)$. \medskip

The key observation to make is that the mixed strategy $\alpha_{p}\in BR(0.4)$
is a \emph{uniform punishment}, that is, $u(\alpha_p, \pi_{s_0}) < u(a_0,\pi_{s_0})$
 and  $u(\alpha_p, \pi_{s_1}) < u(a_1,\pi_{s_1})$.  (See Figure \ref{ex:mixed strategy punishment}.) In words, regardless of the realized signal, an expert is punished for deviating from truth-telling. All the decision-maker needs to know is that a deviation has occurred, and  the presence of the second expert indeed guarantees that deviations are detected. The experts thus benefit from the decision-maker cross-verifying their information. (Naturally, there are other equilibria, where the decision-maker benefits from cross-verification. See the next section.)

\medskip 

We conclude with two additional remarks. First,  if the experts choose the perfectly informative experiment,
 truthful reporting does not constitute an equilibrium. This is because
the actions that are best for the decision-maker at beliefs $\pi_{s_{0}}=0$ and
$\pi_{s_{1}}=1$ are the worst for the experts at those beliefs. Second, for any two experiments $\sigma_1$ and $\sigma_2$, 
there is an equilibrium,  where experts 1 and
2 choose experiments $\sigma_{1}$ and $\sigma_{2}$, respectively, and a babbling
equilibrium of the ensuing sub-game is played. 
\medskip

We now turn to the proof of Theorem 1. The proof rests on three essential properties. First, if the two experts choose the same experiment, their signals' realizations are \emph{perfectly correlated}.  This is because they observe the same outcome. Second,  if the two experts choose the same experiment, the decision-maker \emph{detects} any deviation from truth-telling. This is because the decision-maker receives contradicting messages after any deviation.  However, he cannot identify the deviator and, thus, cannot infer the true signal's realization. Therefore, to deter deviations, the decision-maker must be able to  punish the two experts simultaneously. The third property is the existence of such a \emph{uniform punishment} whenever the experiment is expert optimal. The following lemma states this property.

\begin{lem}[Uniform punishment]
Let $(\lambda_s^*,\pi_s^*)_{s \in S}$ be an optimal splitting.  There exist $\pi_{p}\in\text{co }\Pi^{*}$
and $\alpha_{p}\in BR(\pi_{p})$ such that $u(\alpha_{p},\pi_{s}^{*})\leq\overline{u}(\pi_{s}^{*})$
for all $\pi_{s}^{*}\in\Pi^{*}$. 
\end{lem}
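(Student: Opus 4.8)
The plan is to first reformulate the conclusion as a statement about a single affine function, and then obtain the pair $(\pi_p,\alpha_p)$ as a fixed point of an auxiliary best-reply correspondence. Since the splitting $(\lambda_s^*,\pi_s^*)_{s\in S}$ is optimal (and, after discarding null signals, I may assume $\lambda_s^*>0$ for all $s$), I would begin by taking a supporting hyperplane of $\mathrm{cav}\,\overline{u}$ at $\pi^{\circ}$, i.e.\ an affine function $p$ on $\Delta(\Omega)$ with $p\geq\overline{u}$ everywhere and $p(\pi^{\circ})=\sum_s\lambda_s^*\overline{u}(\pi_s^*)=\mathrm{cav}\,\overline{u}(\pi^{\circ})$. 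Because each $p(\pi_s^*)\geq\overline{u}(\pi_s^*)$ while $\sum_s\lambda_s^*p(\pi_s^*)=p(\pi^{\circ})=\sum_s\lambda_s^*\overline{u}(\pi_s^*)$ with all $\lambda_s^*>0$, this forces the \emph{vertex tightness} $p(\pi_s^*)=\overline{u}(\pi_s^*)$ at every $\pi_s^*\in\Pi^{*}$. Since $u(\alpha,\cdot)$ is affine in the belief, the target inequalities $u(\alpha_p,\pi_s^*)\leq\overline{u}(\pi_s^*)=p(\pi_s^*)$ at all vertices are then equivalent to $u(\alpha_p,\cdot)\leq p$ on the whole face $F:=\mathrm{co}\,\Pi^{*}$. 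Thus it suffices to exhibit a belief $\pi_p\in F$ and a best reply $\alpha_p\in BR(\pi_p)$ whose payoff hyperplane lies weakly below $p$ on $F$.

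To produce such a pair I would set up a two-player auxiliary game and invoke Kakutani. Write $d_s(\alpha):=u(\alpha,\pi_s^*)-\overline{u}(\pi_s^*)$ and, for $y\in\Delta(S)$, set $\pi(y):=\sum_s y_s\pi_s^*\in F$. Consider the correspondence on $\Delta(S)\times\Delta(A)$ sending $(y,\alpha)$ to $\big(\arg\max_{y'\in\Delta(S)}\sum_s y'_s d_s(\alpha)\big)\times BR(\pi(y))$. Each factor has nonempty, convex, compact values (the first is a face of $\Delta(S)$, the second is $BR(\pi(y))=\Delta(A(\pi(y)))$), and each has closed graph, since $d_s$ is continuous in $\alpha$, the map $y\mapsto\pi(y)$ is linear, and $BR$ is upper hemicontinuous. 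Kakutani's theorem therefore yields a fixed point $(y^*,\alpha^*)$, that is, $\alpha^*\in BR(\pi(y^*))$ and $y^*$ is concentrated on $\arg\max_s d_s(\alpha^*)$.

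The fixed-point conditions then close the argument through a single affine identity. On one hand, $y^*\in\arg\max_{y'}\sum_s y'_s d_s(\alpha^*)$ gives $\sum_s y^*_s d_s(\alpha^*)=\max_s d_s(\alpha^*)$. On the other hand, using affineness of both $u(\alpha^*,\cdot)$ and $p$ together with the vertex tightness $\overline{u}(\pi_s^*)=p(\pi_s^*)$,
\[
\sum_s y^*_s d_s(\alpha^*)=u(\alpha^*,\pi(y^*))-p(\pi(y^*))\leq \overline{u}(\pi(y^*))-p(\pi(y^*))\leq 0,
\]
where the first inequality holds because $\alpha^*\in BR(\pi(y^*))$ and $\overline{u}$ is the maximal expert payoff over best replies, and the second because $p\geq\overline{u}$. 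Combining the two displays gives $\max_s d_s(\alpha^*)\leq0$, i.e.\ $u(\alpha^*,\pi_s^*)\leq\overline{u}(\pi_s^*)$ for every $s$. Taking $\pi_p:=\pi(y^*)\in\mathrm{co}\,\Pi^{*}$ and $\alpha_p:=\alpha^*\in BR(\pi_p)$ then proves the lemma.

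The step I expect to be most delicate is not the fixed point itself but the reduction that makes it bite. The set of admissible punishments $\bigcup_{\pi\in F}BR(\pi)$ is generically non-convex, because a best reply to an interior belief need not be a best reply at the vertices $\pi_s^*$ (as in the worked example, where $\alpha_p$ mixes the middle actions); this rules out a direct minimax over beliefs and actions. The auxiliary game circumvents the difficulty by letting an adversary choose the belief-generating weights $y$ while the decision-maker best responds, so that the coupling $\alpha^*\in BR(\pi(y^*))$ is restored only at equilibrium. The whole argument hinges on the vertex tightness $p(\pi_s^*)=\overline{u}(\pi_s^*)$ — precisely where optimality of the splitting enters — since without it the identity $\sum_s y^*_s d_s(\alpha^*)=u(\alpha^*,\pi(y^*))-p(\pi(y^*))$, and hence the key inequality, would break down.
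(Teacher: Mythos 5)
Your proof is correct, and while it shares its first step with the paper, the heart of the argument is genuinely different. The overlap: your ``vertex tightness'' $p(\pi_s^*)=\overline{u}(\pi_s^*)$, obtained from a supporting hyperplane of $\mathrm{cav}\,\overline{u}$ at $\pi^{\circ}$, is exactly the paper's Claim 1 (stated there as $\overline{u}(\sum_s\lambda_s\pi_s^*)\leq\sum_s\lambda_s\overline{u}(\pi_s^*)$ for all $\lambda\in\Delta^*$), and in both proofs this is where optimality of the splitting and positivity of the weights enter (the paper assumes $\lambda_s^*>0$ implicitly; you make it explicit by discarding null signals). The divergence is in how the pair $(\pi_p,\alpha_p)$ is produced. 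The paper argues by contradiction: it extracts a uniform $\varepsilon>0$ from the finiteness of the family of best-reply sets, uses LP duality (its Claim 2) to show that the correspondence $F(\lambda):=\{\lambda'\in\Delta^*:\min_{\alpha\in BR(\sum_s\lambda_s\pi_s^*)}\sum_s\lambda'_s\bigl(u(\alpha,\pi_s^*)-\overline{u}(\pi_s^*)\bigr)\geq\varepsilon\}$ is nonempty-valued, verifies that $F$ is lower hemicontinuous with convex compact values, and invokes a fixed-point theorem for lower hemicontinuous correspondences (Border, Thm.\ 15.4); the fixed point then contradicts Claim 1. You instead obtain the punishment directly, with no contradiction hypothesis, as a Nash equilibrium (via Kakutani and Berge) of an auxiliary game in which an adversary chooses weights $y\in\Delta(S)$ to maximize $\sum_s y_s\bigl(u(\alpha,\pi_s^*)-\overline{u}(\pi_s^*)\bigr)$ while the decision-maker best-responds under $v$ to $\pi(y)$; the identity $\max_s\bigl(u(\alpha^*,\pi_s^*)-\overline{u}(\pi_s^*)\bigr)=u(\alpha^*,\pi(y^*))-p(\pi(y^*))\leq0$ then closes the argument. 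Your route buys economy of means: it avoids LP duality, the uniform-$\varepsilon$ step, and the comparatively nonstandard lhc fixed-point theorem, using only the standard Kakutani machinery; it also adapts verbatim to the pairwise version needed for the paper's Remark 6 by restricting the adversary to $\Delta(\{s,s'\})$. What the paper's duality claim buys is a quantitative punishment-by-margin-$\varepsilon$ statement, but that strength is needed only because of its contrapositive structure; your one-shot coupling of the belief weights with the best reply makes it unnecessary. The only caveats, shared equally by the paper's own proof, are the existence of a non-vertical supporting hyperplane when $\pi^{\circ}$ lies on the boundary of the simplex (handled by restricting to the face whose relative interior contains $\pi^{\circ}$) and the treatment of zero-probability signals.
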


Lemma 1 is our main technical contribution. We postpone its proof to the end of this section and  now show how to construct an equilibrium of the cheap-talk game with a payoff of $\text{cav\,}\overline{u}(\pi^{\circ})$ to the experts.

\begin{proof}[Proof of Theorem 1.]
Let $(\lambda_s^*,\pi_s^*)_{s \in S}$ be an optimal splitting inducing the payoff $\text{cav\,}\overline{u}(\pi^{\circ})$. Let $\sigma^*$ be the optimal experiment associated with that splitting. Recall that 
$\Pi^*:=\{\pi_s^*: s\in S\}$. From Lemma 1, there exist $\pi_{p}\in \mathrm{co\,} \Pi^*$ and $\alpha_{p}\in BR(\pi_{p})$
such that for all $\pi_{s}^{*}\in\Pi^{*}$, $u(\alpha_{p},\pi_{s}^{*})-\overline{u}(\pi_{s}^{*})\leq0$.\medskip 

We construct a truthful equilibrium as follows. The experts choose the optimal experiment
$\sigma^{*}$ and truthfully report the realized signal. Following the
choice of $\sigma^{*}$, the decision-maker chooses $\alpha\in BR(\pi_{s})$, with
$u(\alpha,\pi_{s})=\overline{u}(\pi_{s})$, when he observes two
identical messages equal to $s$. Alternatively, if the decision-maker receives two
conflicting messages, he chooses $\alpha_{p}$ (sustained by the
belief $\pi_{p}$). Finally, following the choice of any other statistical
experiment, an equilibrium of the continuation game, which exists by finiteness,
is played. It is routine to check that we indeed have an equilibrium.
\end{proof}

We now offer a series of remarks.

\begin{rem}
We have assumed that the two experts share the same preferences. If the preferences of one expert, say the second expert, 
are a convex combination of the preferences of the first expert and the decision-maker,
i.e., $\beta u(a,\omega)+(1-\beta)v(a,\omega)$ for
some $\beta\in[0,1]$, then we can still construct a truthful equilibrium, where  the first expert continues to obtain his commitment value. 
To see this, let $\alpha_{s}$ be such that
$u(\alpha_{s},\pi_{s}^{*})=\overline{u}(\pi_{s}^{*})$ and note that
$v(\alpha_{p},\pi_{s}^{*})\leq v(\alpha_{s},\pi_{s}^{*})=\max_{\tilde{\alpha}}v(\tilde{\alpha},\pi_{s}^{*})$
for all $s$, where $\alpha_{p}$ is the punishment, which exists  by Lemma 1. This implies that $\beta u(\alpha_p,\pi_{s}^{*})+(1-\beta)v(\alpha_p,\pi_{s}^{*})\leq\beta u(\alpha_{s},\pi_{s}^{*})+(1-\beta)v(\alpha_{s},\pi_{s}^{*})$
for all $\pi_{s}^{*}\in\Pi^{*}$, i.e., $\alpha_{p}$ is also a uniform
punishment for the second expert. We illustrate this remark with a simple example. As in Crawford and Sobel (1982), assume that the decision-maker obtains the payoff $-(\alpha-\omega)^2$, when he chooses $\alpha \in [0,1]$ and the state is $\omega$.\footnote{Throughout, we have assumed that the decision-maker has a finite set of actions. Our results extend to the set $A$ being a non-empty compact subset of $\mathbb{R}$ and concave continuous payoff functions. The proof of Lemma 1 only requires a slight modification: we need to invoke duality for convex programming rather than for linear programming.}
 The payoff of the two experts are $-(\alpha-\omega-b)^2$ and $-(\alpha-\omega-\beta b)^2$, with $\beta \in [0,1]$ and $b >0$, respectively.  The second expert is (weakly) less biased than the first expert. Observe that, up to a constant, the payoff of the less biased expert is a convex combination of the payoff of the most biased expert and the decision-maker, that is:
\begin{equation*}
- \left[(1-\beta)(\alpha- \omega)^2 + \beta (\alpha-\omega-b)^2\right]  =  -(\alpha - \omega - \beta b)^2 -b^2\beta(1-\beta). 
\end{equation*} Therefore, there exists an equilibrium, which gives the most biased expert his commitment value.\footnote{In the quadratic example, the payoff $\overline{u}(\pi)$  to the most biased expert is $-(\mathbb{V}_{\pi}[\omega] + b^2)$, with $\mathbb{V}_{\pi}[\omega]$ the variance of $\omega$ with respect to the distribution $\pi$. Since the variance of a real-valued random variable is concave in its distribution, full information disclosure attains the commitment value.}
\end{rem}
\medskip

\begin{rem}
We have assumed that the choice of experiments is publicly observed.
If the decision-maker does not observe the experiments chosen by the two experts,
but if the experts observe each other's experiment choice, then
again there is a truthful equilibrium, where the optimal  experiment $\sigma^{*}$ is chosen
as in Theorem 1. In this equilibrium, the play on the equilibrium path unfolds
as in Theorem 1. If any expert deviates and chooses another experiment
$\sigma \neq \sigma^{*}$, then the two experts send the message $m_{0}$, where $m_0$
is a message that is never sent on the equilibrium path. If the decision-maker
observes two messages that do not match or observes a message equal
to $m_{0}$ from either of the two experts, then he best responds
to her belief $\pi_{p}\in\text{co }\Pi^{*}$ and plays action $\alpha_{p}$.
\end{rem}

\begin{rem} Similarly, if we assume that the experts do not observe each other's choice of experiments, but the decision-maker does, then our result continues to hold. To see this, we construct an equilibrium as follows. In the first stage, the experts choose the optimal experiment. In the second stage, an expert truthfully reports his signal if he has chosen the optimal experiment in the first stage. (The strategies are left unspecified in other contingencies.) If the decision-maker observes the experts choosing the optimal experiment, the decision-maker follows the same strategy as in our main proof. If the decision-maker observes only one expert choosing the optimal experiment, he plays a best-reply to the message sent by that expert. (The strategies are left unspecified in all other contingencies.) On path, the experts receive their commitment value. If an expert chooses another experiment, the decision-maker observes the deviation but not the other expert, who continues to truthfully reveal the signal. Hence, the deviation does not change the expert's payoff. \end{rem}

\begin{rem}
We have assumed that the two experts choose experiments simultaneously. 
This assumption is again not required for our result.
Suppose instead that one expert, say the first expert, chooses an experiment
$\sigma:\Omega\rightarrow\Delta(S_1\times S_2)$, with expert $i$ privately observing the signal's realization $s_i$.  
As before, after observing their signals, the experts send messages to the decision-maker, who then chooses an action. 
Yet again, we have a truthful equilibrium, where the equilibrium payoff of the two experts is $\text{cav\,}\overline{u}(\pi^{\circ})$ as in Theorem 1. 
In this equilibrium, the first expert chooses the optimal experiment and perfectly correlates the second expert's signal with his own. 
\end{rem}

\begin{rem} Our result relies on the assumption that the two experts send messages simultaneously. Instead, consider a game where in the first stage the two experts independently and simultaneously choose experiments and privately observe signals. In the second stage, the decision-maker consults expert 1 and after observing expert 1's message, the decision-maker sends a cheap-talk message to expert 2. In the third stage, expert 2 sends a message to the decision-maker, and in the last stage the decision-maker chooses an action. Again, there is an equilibrium that delivers the experts their commitment payoff under this specification also. In this equilibrium, both experts choose the expert-optimal experiment, expert 1 truthfully reveals his information, the decision-maker babbles after observing expert 1's message, and expert 2 also truthfully reveals his information. Deviations from equilibrium play are punished by the same mechanism as in our main result. 
\end{rem}

\begin{rem} We have assumed weak perfect Bayesian equilibrium as our solution concept. 
If we restrict attention to a finite set of experiments, which contains $\sigma^*$, then 
we can strengthen the solution
concept to sequential equilibrium. We only need a slight modification of Lemma 1 to guarantee that 
the decision-maker believes that the realized signal is either $s$ or $s'$ after observing a 
report $(s,s')$.  We need to prove the existence of a belief
$\pi_{s,s'}\in\Delta(\{\pi_{s}^{*},\pi_{s'}^{*}\})$ and a mixed action
$\alpha_{s,s'} \in BR(\pi_{s,s'})$ such that $u(\alpha_{s,s'},\pi_{\tilde{s}}^{*})-\overline{u}(\pi_{\tilde{s}}^{*})\leq0$
for all $\tilde{s}\in\{s,s'\}$. A minor adaptation of the proof of Lemma
1 guarantees this result. 
\end{rem}

\begin{proof}[ Proof of Lemma 1.]
We first establish two intermediate claims, then we use these two
claims to establish the lemma. Let $(\lambda_s^*,\pi_s^*)_{s \in S}$ be an optimal splitting. Recall that 
$\Pi^*:=\{\pi_s^*: s\in S\}$ and $\Delta^*$ is the set of all probability distributions over $\Pi^*$.

\medskip 

\textbf{Claim 1:} For any $\lambda\in\Delta^{*}$, $\overline{u}(\sum_{s}\lambda_s\pi_{s}^{*})\leq\sum_{s}\lambda_s\overline{u}(\pi_{s}^{*})$.
\medskip

\textit{Proof of Claim 1:} Consider the convex hull of the graph of $\overline{u}$,
i.e., $\co\{(\pi,r)\in\Delta(\Omega)\times\mathbb{R}:r=\overline{u}(\pi)\}$.
By construction,  the point $(\pi^{\circ},\text{cav}\;\overline{u}(\pi^{\circ}))=(\sum_{s}\lambda^{*}_s\pi_{s}^{*},\sum_{s}\lambda^{*}_s\overline{u}(\pi_{s}^{*}))$
is on the boundary of the convex hull. From the supporting
hyperplane theorem, there exists a hyperplane $h\in\mathbb{R}^{|\Omega|}\times\mathbb{R}$
supporting $\co\{(\pi,r)\in\Delta(\Omega)\times\mathbb{R}:r=\overline{u}(\pi)\}$
at $(\pi^{\circ},\text{cav}\;\overline{u}(\pi^{\circ}))$ such that the graph of $\overline{u}$ lies below $h$. For all $s \in S$, the point
$(\pi_{s}^{*},\overline{u}(\pi_{s}^{*}))$ also lies on the hyperplane $h$.
Consequently, the point $(\sum_s\lambda_s \pi^*_s,\sum_{s}\lambda_s\overline{u}(\pi_{s}^{*}))$,
must also lies on the hyperplane. Therefore, $\overline{u}(\sum_s\lambda_s \pi^*_s)\leq\sum_{s}\lambda_s\overline{u}(\pi_{s}^{*})$
as required. \hfill $\blacksquare$ 

\medskip

\textbf{Claim 2:} Choose any non-empty subset $B \subset A$ and $\varepsilon>0$. If $\max_{s\in S}[u(\alpha,\pi_{s}^{*})-\overline{u}(\pi_{s}^{*})]\geq\varepsilon$
for each $\alpha\in\Delta(B)$, then there exists $\hat{\lambda}\in\Delta^{*}$
such that \newline
$\min_{\alpha\in\Delta(B)}u(\alpha,\sum_{s}\hat{\lambda}_s\pi_{s}^{*}) \geq \sum_{s}\hat{\lambda}_s\overline{u}(\pi_{s}^{*})+\varepsilon$.
\medskip 

\textit{Proof of Claim 2:} The claim follows from duality. Consider the following linear program:
\begin{align*}
\min_{\left(x,\alpha\right)\in\mathbb{R}\times\Delta\left(B\right)}x
\end{align*}
subject to: for all $s \in S$, 
\begin{equation*}
\sum_{a\in B}\alpha(a)\left[u(a,\pi_{s}^{*})-\overline{u}(\pi_{s}^{*})\right]\leq x.
\end{equation*}
This minimization problem has a solution $\hat{x}$. Our hypothesis implies that $\hat{x}\geq\varepsilon$.
The dual program is given by
\[
\max_{\left(y,\lambda\right)\in\mathbb{R}\times\Delta(\Pi^{*})}y
\]
subject to: for all $a \in B$, 
\[ \sum_{s\in S}\lambda_s\left[u(a,\pi_{s}^{*})-\overline{u}(\pi_{s}^{*})\right] \geq y.\]

Since the primal linear program has a solution, the dual program also
has a solution $(\hat{y},\hat{\lambda})$. No duality gap further
implies that $\hat{y}=\hat{x}\geq\varepsilon$. (See Section 4.2 of \citealp{Luen08}.) Therefore, for all $a \in B$, 
\[
u(a,\sum_{s}\hat{\lambda}_s\pi_{s}^{*})=\sum_{s\in S}\hat{\lambda}_s u(a,\pi_{s}^{*})\geq\varepsilon+\sum_{s\in S}\hat{\lambda}_s\overline{u}(\pi_{s}^{*})
\]
Hence, $u(\alpha,\sum_{s}\hat{\lambda}_s\pi_{s}^{*})\geq\sum_{s}\hat{\lambda}_s\overline{u}(\pi_{s}^{*})+\varepsilon$
for all $\alpha\in\Delta(B)$, as required. \hfill $\blacksquare$

\medskip 

We now use Claims 1 and 2 to complete the proof. Denote by $br(\pi)\subset A$
the decision-maker's set of pure best-replies to belief $\pi$. \medskip 

By contradiction, assume that there does not exist $\pi_{p}\in \mathrm{co}\, \Pi^{*}$ and $\alpha_{p}\in BR(\pi_{p})$
such that $u(\alpha_{p},\pi_{s}^{*})-\overline{u}(\pi_{s}^{*})\leq 0$
for all $\pi_{s}^{*}\in\Pi^{*}$. Note that $\pi\in \mathrm{co}\, \Pi^{*}$ if
and only if $\pi=\sum_{s}\lambda_{s}\pi_{s}^{*}$ for some $\lambda\in\Delta^{*}$.
Hence, our contradiction hypothesis can be restated as follows: for
each $\lambda\in\Delta^*$, there exists $\varepsilon(\lambda)>0$
such that $\max_{s\in S}[u(\alpha,\pi_{s}^{*})-\overline{u}(\pi_{s}^{*})]\geq\varepsilon(\lambda)$
for each $\alpha\in\Delta(br(\sum_{s}\lambda_{s}\pi_{s}^{*}))=BR(\sum_{s}\lambda_{s}\pi_{s}^{*})$.
Let $\varepsilon:=\min_{\lambda\in\Delta^*}\varepsilon(\lambda)$.
Note that $\varepsilon>0$ because $\varepsilon(\lambda)$ depends only
on the finite set $br(\sum_{s}\lambda_{s}\pi_{s}^{*})$, and there
are finitely many such subsets of $A$.

Define the correspondence $F:\Delta^{*}\rightarrow\Delta^{*}$, with
\[
F(\lambda):=\Big\{\lambda'\in\Delta^*:\min_{\alpha\in BR(\sum_{s}\lambda_{s}\pi_{s}^{*})}\sum_{s}\lambda'_s\Big(u(\alpha,\pi_{s}^{*})-\overline{u}(\pi_{s}^{*})\Big)\geq\varepsilon\Big\}.
\]
We can readily check that this correspondence is convex and
compact valued. We argue below that it is non-empty valued
and lower hemi-continuous. Hence, the correspondence has a fixed point
$\overline{\lambda}\in F(\overline{\lambda})$ by Theorem 15.4 in
\citet{border1990fixed}. Noting that $\sum_{s}\overline{\lambda}(s)u(\alpha,\pi_{s}^{*})=u(\alpha,\sum_{s}\overline{\lambda}(s)\pi_{s}^{*})$,
we find
\[
\min_{\alpha\in BR(\sum_{s}\overline{\lambda}(s)\pi_{s}^{*})}\Big(u(\alpha,\sum_{s}\overline{\lambda}(s)\pi_{s}^{*})-\sum_{s}\overline{\lambda}(s)\overline{u}(\pi_{s}^{*})\Big)\geq\varepsilon
\]
for $\overline{\lambda}\in\Delta^{*}$ contradicting Claim 1 and establishing
the result.
\medskip

We now show that the correspondence is non-empty valued. Pick any $\lambda\in\Delta^{*}$. The contradiction
hypothesis states that $\max_{s\in S}[u(\alpha,\pi_{s}^{*})-\overline{u}(\pi_{s}^{*})]\geq\varepsilon$
for each $\alpha\in BR(\sum_{s}\lambda_s\pi_{s}^{*})$. Claim
2 then implies that there exists $\hat{\lambda}\in\Delta^{*}$ such that
\[\min_{\alpha\in BR(\sum_{s}\lambda_s\pi_{s}^{*})}\sum_{s}\hat{\lambda}_s(u(\alpha,\pi_{s}^{*})-\overline{u}(\pi_{s}^{*}))\geq\varepsilon,\]
i.e., the correspondence is non-empty valued. 
\medskip 

Finally, we prove lower hemi-continuity.  Pick an open set $O\subseteq\Delta^{*}$
such that $F(\lambda)\cap O\neq\emptyset$. Since $BR$ is upper hemi-continuous
(by the maximum principle) and $A$ is finite, there exists a neighborhood
$O'$ of $\lambda$ such that $BR(\sum_{s}\lambda'_s\pi_{s}^{*})\subseteq BR(\sum_{s}\lambda_s\pi_{s}^{*})$
for all $\lambda' \in O'$. Therefore, for all  $\lambda' \in O'$,
\[
\min_{\alpha\in BR(\sum_{s}\lambda'_s \pi_{s}^{*})}\sum_{s}\lambda^{''}_s\Big(u(\alpha,\pi_{s}^{*})-\overline{u}(\pi_{s}^{*})\Big)\geq\min_{\alpha\in BR(\sum_{s}\lambda_s \pi_{s}^{*})}\sum_{s}\lambda^{''}_{s}\Big(u(\alpha,\pi_{s}^{*})-\overline{u}(\pi_{s}^{*})\Big)\geq\varepsilon
\]
for any $\lambda^{''} \in F(\lambda)\cap O$ because $BR(\sum_{s}\lambda'_s\pi_{s}^{*})\subseteq BR(\sum_{s}\lambda_s\pi_{s}^{*})$,
i.e., $\lambda^{''}\in F(\lambda')$. 
Hence, $F(\lambda')\cap O\neq\emptyset$ for all $\lambda' \in O'$,
which proves the lower hemi-continuity of $F$ (Definition 11.3 in
\citet{border1990fixed}).
\end{proof}

\section{The Decision-maker and Cross-verification}

The previous section showed that the experts benefit from the decision-maker cross-verifying their information. 
This section explores whether the decision-maker can also benefit from cross-verification. 
\medskip

We begin with some definitions. Fix a cheap-talk game $\Gamma(\pi^{\circ},u,v)$. We say that 
the experts benefit from persuasion if $\text{cav\,}\overline{u}(\pi^{\circ})>\overline{u}(\pi^{\circ})$. Similarly, we say that the decision-maker benefits from cross-verification if there exists an equilibrium of the cheap-talk game, where the decision-maker's payoff exceeds the ex-ante payoff $\max_{a\in A}v(a,\pi^{\circ})$. Notice that if the decision-maker benefits from cross-verification, the experts must reveal some information to the decision-maker. \medskip 

Define $\hat{A}:=\{a\in A:\exists\pi\in\Delta(\Omega)\text{ s.t. }a\in BR(\pi)\}$
and $\overline{v}(\pi):=\max_{\alpha}v(\alpha,\pi)$ for all $\pi\in\Delta(\Omega)$.
We say that there are \emph{no redundant actions} for the decision-maker if for all non-empty $B\subset\hat{A}$,
there exists $\pi\in\Delta(\Omega)$ such that $\overline{v}(\pi)>\max_{a\in B}v(a,\pi)$.
There are no redundant actions for the experts if there are no two distinct
actions $a$ and $a'$ such that $u(a,\omega)=u(a',\omega)$ for all
$\omega\in\Omega$. \medskip

\begin{rem}
The conditions of non-redundancy are generic. Moreover, the condition of no redundant actions for the decision-maker does not
preclude strictly dominated actions. Two important implications of that condition are as follows: (i) the set $BR^{-1}(a):=\{\pi\in\Delta(\Omega):v(a,\pi)=\overline{v}(\pi)\}$
has full dimension  (as a subset of the simplex of dimension $|\Omega|-1$), and (ii) no action other than $a$ is optimal in the
relative interior of $BR^{-1}(a)$, denoted by $\mathrm{int\,}BR^{-1}(a)$.
\end{rem}
\medskip

Theorem 1 showed that the experts benefit from cross-verification
in games where they benefit from persuasion. The following proposition
further establishes that the decision-maker also benefits from cross-verification in
such games.

\begin{prop}
Assume that there are no redundant actions for the decision-maker in the game
$\Gamma(\pi^{\circ},u,v)$. At almost all priors $\pi^{\circ}$, if the experts benefit from persuasion, then the decision-maker benefits from cross-verification.
\end{prop}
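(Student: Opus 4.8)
The plan is to exploit the following dichotomy. If the experts benefit from persuasion, then $\text{cav}\,\overline{u}(\pi^\circ) > \overline{u}(\pi^\circ)$, so the optimal splitting $(\lambda_s^*, \pi_s^*)_{s \in S}$ induces at least two distinct posteriors $\pi_s^* \neq \pi^\circ$. By Theorem 1, there is a truthful equilibrium in which the experts choose the optimal experiment $\sigma^*$ and the decision-maker, on path, plays some $\alpha_s \in BR(\pi_s^*)$ after matching messages $s$. The decision-maker's ex-ante equilibrium payoff is $\sum_s \lambda_s^* v(\alpha_s, \pi_s^*)$, which I want to compare to the no-information payoff $\max_{a} v(a, \pi^\circ) = \overline{v}(\pi^\circ)$. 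Since $v(\cdot, \pi)$ is linear in $\pi$ and the posteriors average to $\pi^\circ$, I have $\sum_s \lambda_s^* \overline{v}(\pi_s^*) \geq \overline{v}(\pi^\circ)$ by convexity of $\overline{v}$ (the upper envelope of linear functions). The decision-maker benefits strictly from cross-verification precisely when this Jensen inequality is strict \emph{and} when the equilibrium best-replies $\alpha_s$ can be chosen to realize $\overline{v}(\pi_s^*)$ rather than some inferior element of $BR(\pi_s^*)$.

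The first key step is to show that, at almost all priors, the concavification being strictly above $\overline{u}$ forces the optimal posteriors to lie in distinct best-reply regions for the decision-maker, i.e. the posteriors $\pi_s^*$ are genuinely spread out across the simplex rather than all sitting inside a single $BR^{-1}(a)$. The \emph{no redundant actions} hypothesis is what makes this bite: it guarantees (via Remark 6) that each region $BR^{-1}(a)$ has full dimension and that distinct actions are optimal on distinct full-dimensional cells. I would argue that if all $\pi_s^*$ shared a common decision-maker best reply $a$, then the convexity of $\overline{v}$ would be degenerate along the segment containing them, forcing $\sum_s \lambda_s^* \overline{v}(\pi_s^*) = \overline{v}(\pi^\circ)$; the "almost all priors" qualifier is needed to rule out the measure-zero set of priors where the optimal splitting happens to be confined to a single best-reply cell.

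The second step is the equilibrium-selection refinement: even when the posteriors straddle distinct best-reply regions, truthfulness in Theorem 1 only pins down $\alpha_s \in BR(\pi_s^*)$ with $u(\alpha_s, \pi_s^*) = \overline{u}(\pi_s^*)$, and such an $\alpha_s$ need not maximize the decision-maker's payoff. Here I would perturb the posteriors slightly — replacing each $\pi_s^*$ by a nearby $\tilde{\pi}_s$ in the relative interior of the relevant cell $BR^{-1}(a_s)$ — so that the decision-maker's best reply is \emph{unique} and automatically achieves $\overline{v}(\tilde{\pi}_s)$, while the experts' indifference/incentive structure and the existence of a uniform punishment (Lemma 1) are preserved under the perturbation. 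Full-dimensionality of the cells, again from no redundant actions, is exactly what permits this interior perturbation; genericity of $\pi^\circ$ ensures the perturbed splitting still averages to $\pi^\circ$ and still yields $\text{cav}\,\overline{u}$ up to a negligible loss.

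The main obstacle I anticipate is reconciling these two steps simultaneously: the perturbation needed to force a strict Jensen gap for $\overline{v}$ must not destroy the optimality of the splitting for the experts (which would cost the experts their commitment value and break the Theorem 1 construction) nor the existence of the uniform punishment. In other words, the delicate part is showing that one can move the posteriors into the interiors of distinct decision-maker cells \emph{while keeping them on a single supporting hyperplane of the graph of $\overline{u}$} — the condition from Claim 1 that makes the splitting optimal. I expect this is where the "almost all priors" hypothesis does its real work: for generic $\pi^\circ$ the optimal splitting is robust to such small relocations, and the supporting hyperplane of $\text{co}\{(\pi, \overline{u}(\pi))\}$ at $(\pi^\circ, \text{cav}\,\overline{u}(\pi^\circ))$ meets the graph along a set with enough freedom to accommodate both the decision-maker's strict gain and the experts' optimality.
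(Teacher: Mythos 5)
There is a genuine gap, and it stems from two misconceptions. First, your ``equilibrium-selection refinement'' step is vacuous: every element of $BR(\pi_s^*)$ gives the decision-maker exactly $\overline{v}(\pi_s^*)$ when his belief is $\pi_s^*$ --- that is the definition of a best reply --- so there is no ``inferior element of $BR(\pi_s^*)$'' to worry about, and no perturbation is needed to ``realize $\overline{v}(\pi_s^*)$.'' Second, and more seriously, your first key step is false: the set of priors at which all expert-optimal posteriors share a common decision-maker best reply is \emph{not} measure zero. The paper's own illustrating example (Figure 2) is a counterexample: the expert-optimal splitting sends every prior in the open interval $(0.3,0.6)$ to the posteriors $0.3$ and $0.6$, and the single action $a_p$ is optimal for the decision-maker on all of $[0.3,0.6]$. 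So on a positive-measure set of priors the Jensen inequality for $\overline{v}$ holds with equality along the expert-optimal splitting, and your strategy --- perturbing the optimal posteriors into distinct cells \emph{while keeping them on a supporting hyperplane of the graph of} $\overline{u}$ --- is impossible there: any splitting confined to $[0.3,0.6]$ keeps the decision-maker at $a_p$, and any splitting that leaves $[0.3,0.6]$ leaves the hyperplane. The genericity in the statement has nothing to do with this; in the paper it is used only to restrict to interior priors at which the decision-maker has at most two best replies.

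The missing idea is that the equilibrium benefiting the decision-maker need not give the experts their commitment value: the definition of ``benefits from cross-verification'' only asks for \emph{some} equilibrium with payoff above $\overline{v}(\pi^{\circ})$. The paper's proof, in the hard case $\sum_s \lambda_s^* \overline{v}(\pi_s^*)=\overline{v}(\pi^{\circ})$, first shows every $a\in BR(\pi^{\circ})$ remains optimal at every $\pi_s^*$ but, because the experts strictly benefit from persuasion, satisfies $u(a,\pi_{s_a}^*)<\overline{u}(\pi_{s_a}^*)=u(a^*_{s_a},\pi_{s_a}^*)$ at some posterior $\pi_{s_a}^*$. It then abandons the expert-optimal splitting altogether and builds a \emph{two-point} splitting $(\underline{\pi}_a,\overline{\pi}_a)$ of $\pi^{\circ}$, with $\overline{\pi}_a$ pushed into the relative interior of $BR^{-1}(a^*_{s_a})$ (full-dimensionality from non-redundancy is what allows this), and supports it as a truthful equilibrium using $a\in BR(\pi^{\circ})$ itself as the uniform punishment: $u(a,\underline{\pi}_a)\leq\overline{u}(\underline{\pi}_a)$ and $u(a,\overline{\pi}_a)<\overline{u}(\overline{\pi}_a)$. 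Since $a^*_{s_a}$ is uniquely optimal at $\overline{\pi}_a$, the decision-maker's payoff strictly exceeds $v(a,\pi^{\circ})=\overline{v}(\pi^{\circ})$. Your proposal never constructs such a non-optimal splitting, and by insisting on preserving the experts' commitment value it forecloses the only route that works in exactly the cases where there is something to prove.
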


We first illustrate the logic of the proposition with the help of a simple example. There are two states, $\omega_0$ and $\omega_1$, and three actions, $a_0,a_1$ and $a_p$. Throughout the example, probabilities refer to the probability of $\omega_1$. The prior is $\pi^{\circ}=0.45$. The payoffs are illustrated in Figure \ref{Fig:example pure strategy}. The optimal experiment consists in splitting the prior into the posteriors $\pi_{s_{0}}^{*}=0.3$ and $\pi_{s_{1}}^{*}=0.6$. The experts strictly benefit from persuasion. From Theorem 1, there exists a truthful equilibrium, where an expert's payoff is his commitment value.  Action $a_p$ is the uniform punishment sustaining the equilibrium. Note that $a_p$ is uniquely optimal at the prior and also optimal at the two posteriors. Consequently,  the decision-maker does not benefit from cross-verification at the equilibrium.  Yet, we can construct another equilibrium, where the decision-maker benefits from cross-verification. To see this, consider the splitting of the prior into $\pi_{s_0}=0.2$ and $\pi_{s_1}=0.8$. At $\pi_{s_0}$ (resp., $\pi_{s_1}$), the decision-maker plays $a_0$ (resp., $a_1$). To sustain this splitting as an equilibrium, the decision-maker punishes the experts with $a_p$.  The decision-maker strictly benefits from this more informative experiment.  \medskip 

We prove that the logic of the example generalizes to almost all priors. That is, for all priors, but for a subset with Lebesgue measure zero, we can always construct an equilibrium of the cheap-talk game, where the decision-maker benefits from cross-verification if the experts benefit from persuasion. More precisely, we prove that the proposition holds at all interior priors, where the decision-maker has at most two best-replies, a generic condition.\medskip

The need for non-redundancy is clear. If the decision-maker is indifferent between all his actions, the decision-maker cannot benefit from cross-verification, while the experts can benefit from persuasion. We now turn to the proof.

\begin{figure}
\subfloat[{DM's Preferences. Action $a_{0}$ is optimal for the DM for $\pi\in[0,0.3]$,
action $a_{p}$ is optimal for $\pi\in[0.3,0.6]$, and $a_{1}$ is
optimal for $[0.6,1]$.}]{\includegraphics[scale=0.9]{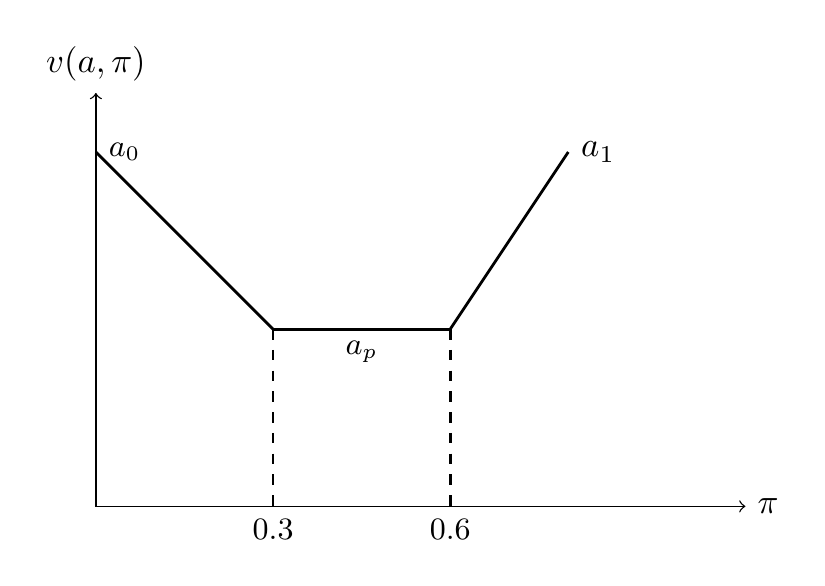}}\qquad{}\subfloat[{Expert's Preferences. The solid black lines depict $\overline{u}(\pi)$,
the solid blue line depicts $\text{cav }\overline{u}(\pi)$ for $\pi\in[0,3,0.6]$,
and $\text{cav }\overline{u}(\pi)$ coincides with $\overline{u}(\pi)$
for $\pi\protect\notin[0,3,0.6]$.}]{\includegraphics[scale=0.9]{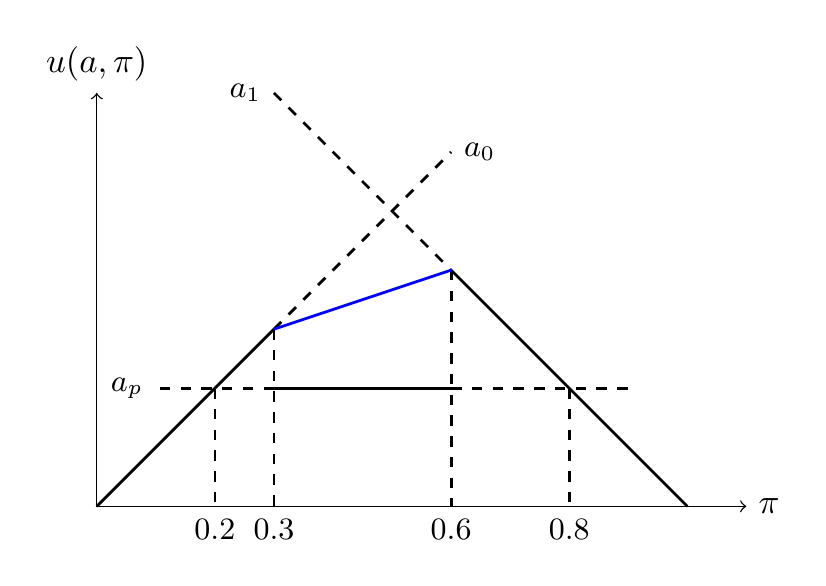}\label{fig:example pure strategy DM}}\caption{DM Benefits from Cross-verification.}
\label{Fig:example pure strategy}
\end{figure}

\begin{proof}[Proof of Proposition 1.]
Consider an optimal splitting $(\lambda^*_s,\pi_s^*)_{s \in S}$ of $\pi^{\circ}$, which induces the value $\cav \overline{u}(\pi^{\circ})$, where $ \cav \overline{u}(\pi^{\circ})> \overline{u}(\pi^{\circ})$. Without loss of generality, assume that $\lambda_{s}^*>0$ for all $s \in S$. Let $\overline{v}(\pi):=\max_{\alpha} v(\alpha,\pi)$ for all $\pi \in \Delta(\Omega)$. \medskip

If the decision-maker benefits from the statistical experiment, there is nothing to prove. So, assume that  the decision-maker does not benefit from the statistical experiment, i.e., $\sum_{s} \lambda^*_s \overline{v}(\pi^*_s)=\overline{v}(\pi^{\circ})$. We construct another equilibrium at which the decision-maker benefits from cross-verification. \medskip  

We first claim that for all $a \in BR(\pi^{\circ})$, $a \in BR(\pi)$ for all $\pi \in \co\{\pi^*_s: s \in S\}$. To see this, consider any $a \in BR(\pi^{\circ})$ and observe that 
\begin{eqnarray*}
\sum_{s} \lambda^*_s\overline{v}(\pi^*_s)=\overline{v}(\pi^{\circ})=v(a,\pi^{\circ})= v\left(a,\sum_{s}\lambda^*_s \pi_s^*\right) = \sum_{s}\lambda^*_s v(a,\pi_s^*).
\end{eqnarray*}
It follows that 
\begin{eqnarray*}
\sum_{s}\underbrace{\lambda^*_s}_{>0}  (\underbrace{\overline{v}(\pi^*_s)-v(a,\pi_s^*)}_{\geq 0})=0
\end{eqnarray*}
If there exists $s$ such that $\overline{v}(\pi_s^*) > v(a,\pi^*_s)$, we have a contradiction. Hence, $a \in BR(\pi_s^*)$ for all $s$ and, consequently, $a \in BR(\pi)$ for all $\pi \in \co\{\pi^*_s: s \in S\}$. 
\medskip 

From the definition of $\overline{u}$, we have that $u(a,\pi_s^*) \leq \overline{u}(\pi_s^*)$ for all $s$, for all $a \in BR(\pi^{\circ})$, since 
$BR(\pi^{\circ}) \subseteq BR(\pi_s^*)$ for all $s$. We now argue that for all $a \in BR(\pi^{\circ})$, there exists $s_{a} \in S$ such that 
$u(a,\pi_{s_{a}}^*) < \overline{u}(\pi_{s_{a}}^*)$. Choose any $a \in BR(\pi^{\circ})$. To the contrary, assume that $u(a,\pi_s^*) = \overline{u}(\pi_s^*)$ for all $s$. We then have
\begin{eqnarray*}
\cav \overline{u}(\pi^{\circ})=  \sum_{s} \lambda_s^* \overline{u}(\pi_s^*)= \sum_{s} \lambda_s^* u(a,\pi_s^*)
= u(a,\pi^{\circ}) \leq \overline{u}(\pi^{\circ}) \leq \cav \overline{u}(\pi^{\circ}),
\end{eqnarray*}
a contradiction with the expert benefiting from the experiment. \medskip

To sum up, we have (i) $BR(\pi^{\circ}) \subseteq BR(\pi)$ for all  $\pi \in \co\{\pi^*_s: s \in S\}$, and (ii) for each $a \in BR(\pi^{\circ})$, there exists $s_{a}$ such that $u(a^*_{s_{a}},\pi_{s_{a}}^*) > u(a,\pi_{s_{a}}^*)$ with $a^*_{s_{a}} \in BR(\pi_{s_{a}}^*)$ satisfying $u(a^*_{s_{a}},\pi_{s_{a}}^*)=\overline{u}(\pi_{s_{a}}^*)$.\medskip

For each $a \in BR(\pi^{\circ})$, consider the open ball $\mathcal{O} =\{\pi \in \Delta(\Omega): ||\pi-\pi^*_{s_{a}}|| < \varepsilon\}$ such that  $u(a,\pi) <  u(a^*_{s_{a}},\pi)$ for all $\pi$ in the open ball. Since $u$ is continuous in $\pi$ and $u(a,\pi_{s_{a}}^*) <  u(a^*_{s_{a}},\pi_{s_{a}}^*)$, such an open ball exists. \medskip 

We claim that $\mathcal{O}$ intersects the relative interior of $BR^{-1}(a^*_{s_{a}})$. To see this, note that $\mathcal{O} \cap BR^{-1}(a^*_{s_{a}}) \neq \emptyset$ since $\pi^*_{s_{a}}$ is an element of both $\mathcal{O}$ and $BR^{-1}(a^*_{s_{a}})$. Moreover, it follows from the non-redundancy of $A$ that $\pi^*_{s_{a}}$ is not in the relative interior of $BR^{-1}(a^*_{s_{a}})$ since  any $a \in BR(\pi^{\circ})$ is also optimal at $\pi^*_{s_{a}}$. Since the relative interior of $BR^{-1}(a^*_{s_{a}})$ is non-empty, there exists $\pi^{**}$ in the relative interior such that the half-open line segment $[\pi^{**},\pi^*_{s_{a}})$ is contained in the relative interior. (See Theorem 2.1.3 and Lemma 2.1.6 in Hiriart-Urruty and Lemar\'echal.) Therefore, there exists $\overline{\pi}_a$ in the intersection of the relative interior of $BR^{-1}(a^*_{s_{a}})$ and $\mathcal{O}$, i.e., such that $u(a,\overline{\pi}_a)< u(a_{s}^*,\overline{\pi}_a)=\overline{u}(\overline{\pi}_a)$. Note that $v(a_{s_{a}}^*,\overline{\pi}_a) > v(a,\overline{\pi}_a)$ since $a_{s_{a}}^*$ is uniquely optimal at $\overline{\pi}$. In other words, there is an element of $BR(\pi^{\circ})$, namely $a$, which is not an element of $BR(\overline{\pi}_a)$.\medskip 

The last step consists in showing that there exists $a \in BR(\pi^{\circ})$ and $\underline{\pi}_a  \in BR^{-1}(a)$ such that  the open segment $(\underline{\pi}_a,\overline{\pi}_a)$ includes $\pi^{\circ}$. Indeed, if such an open segment exists, we have a splitting $(\underline{\pi}_a,\overline{\pi}_a)$ of $\pi^{\circ}$ such that $\overline{u}(\underline{\pi}_a) \geq u(a,\underline{\pi}_a)$, $\overline{u}(\overline{\pi}_a)=u(a_{s_{a}}^*,\overline{\pi}_a) > u(a,\overline{\pi}_a)$.  This splitting can be supported as a truthful equilibrium (with $a$ as the punishment at belief $\pi^{\circ}$). Moreover, since $v(a_{s_{a}}^*,\overline{\pi}_a) > v(a,\overline{\pi}_a)$, the decision-maker strictly benefits, the desired contradiction. \medskip

Finally, suppose that $\pi^{\circ}$ is in the interior of the simplex. If $BR(\pi^{\circ})=\{a\}$, then  $\pi^{\circ}$ is in the relative interior of $BR^{-1}(a)$. Thus, we can trivially find a segment with the required property. \medskip

If $BR(\pi^{\circ})=\{a,b\}$ and $s_a=s_b$, then the same arguments apply, since the open segment will intersect either 
$BR^{-1}(a)$ or $BR^{-1}(b)$. If $s_a \neq s_b$, choose $\overline{\pi}_{s_a}$ such that $b$ is uniquely optimal at $\overline{\pi}_{s_a}$.  Such $\overline{\pi}_{a}$  exists since $v(b,\pi_{s_a})= \max_{a' \in BR(\pi_{s_a})}v(a',\pi_{s_a})$ (if not $s_a=s_b$). As before, the open segment intersects either 
$BR^{-1}(a)$ or $BR^{-1}(b)$. However, it cannot be $BR^{-1}(b)$. If it were, $b$ would be uniquely optimal at $\overline{\pi}_{s_a}$ and optimal at $\pi^{\circ}$ and $\underline{\pi}_{s_a}$, which is not possible since $BR^{-1}(b)$ is convex. \medskip 

Since the set of interior priors with at most two best-replies is generic, the proof is complete.\end{proof}

Proposition 1 does not generalize to all priors. For a counter-example, consider  Figure \ref{fig:prop1}. There are three states, $\omega_0$, $\omega_1$ and $\omega_2$, and two actions, $a$ and $b$. The action $a$ (resp., $b$) is optimal in the left triangle marked ``$a$'' (resp., in the right triangle marked ``$b$''). At the prior $\pi^{\circ}$, the action $a$ is the unique best-reply of the decision-maker. Assume that $u(b,\omega_1) >u(a,\omega_1)$.  Thus, if the experts truthfully reveal the state, they benefit from persuasion, while the decision-maker does not.\footnote{If there are two states, Proposition 1 generalizes to all interior priors. In this case, non-redundancy of the decision-maker's payoff implies that the decision-maker has at most two best-replies at each belief, where we know that Proposition 1 holds. In general, however, we do not know whether the proposition generalizes to all interior priors.}

\begin{figure}[h]
\begin{center}
\begin{tikzpicture}[scale=1.1]
       \draw[fill=lightgray] (0,0)--(4,0)  -- (2,3.46) -- (0,0);
    \node[below] at (0,0) {$\omega_0$};
     \node[above] at (2,3.46) {$\omega_1$}; 
     \node[below] at (4,0) {$\omega_2$};
          \draw[fill] (1,1.7) circle [radius=0.05];
     \node[left] at (1,1.7) {$\pi^{\circ}$};
\draw[dashed] (2,0)--(2,3.46);
\node[] at (1.25,1) {$a$};
\node[] at (2.75,1) {$b$};
   \end{tikzpicture}
 \end{center}
\caption{A counter-example}
\label{fig:prop1}
\end{figure}
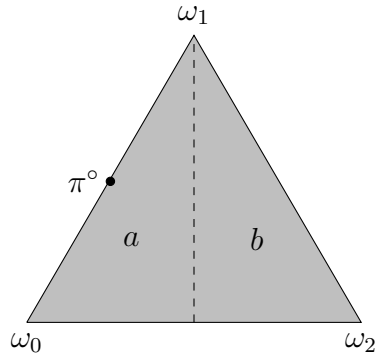

%In general, the difficulty is to find a splitting, which simultaneously benefits the decision-maker and admits a uniform punishment. When the decision-maker has many best-replies at a prior, the prior must be on the boundary of many convex polyhedrons, hence on a lower dimensional subset of the simplex. This makes it harder to perturb the optimal experiment  so as to simultaneously benefit the decision-maker and yet admit a uniform punishment. The situation is somewhat similar, albeit not identical, to the one in Figure \ref{fig:prop1}.

\medskip

Proposition 1 proved that the decision-maker benefits from cross-verification
whenever the experts benefit from persuasion. We now show a partial converse, that is, 
the decision-maker benefits from cross-verification only when the experts benefit from persuasion.

\begin{prop}
Assume that there are no redundant actions for the experts and the
decision-maker in the game $\Gamma(\pi^{\circ},u,v)$. If $\overline{u}$ is a
concave function, then the decision-maker does not benefit from cross-verification. That is, in all equilibria of  $\Gamma(\pi^{\circ},u,v)$, 
the decision-maker's  payoff is $\overline{v}(\pi^{\circ})$. 
\end{prop}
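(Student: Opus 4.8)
The plan is to pin the decision-maker's equilibrium payoff to $\overline{v}(\pi^{\circ})$ from both sides: the easy direction by convexity of $\overline{v}$, and the hard direction by a structural identity for $\overline{u}$ that makes cross-verification toothless. First I would record the lower bound. In any equilibrium, let $(\mu_{k},\pi_{k})_{k}$ be the decision-maker's on-path posteriors (so $\sum_{k}\mu_{k}\pi_{k}=\pi^{\circ}$) and let $\alpha_{k}\in BR(\pi_{k})$ be his action after the corresponding message profile. Since $\overline{v}=\max_{a}v(a,\cdot)$ is convex and $v(\alpha_{k},\pi_{k})=\overline{v}(\pi_{k})$, Jensen's inequality gives a decision-maker payoff $\sum_{k}\mu_{k}\overline{v}(\pi_{k})\ge\overline{v}(\pi^{\circ})$. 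It then suffices to produce a single action $a^{\star}$ that is a best reply at every on-path $\pi_{k}$ with $v(a^{\star},\pi_{k})=\overline{v}(\pi_{k})$: averaging yields $\sum_{k}\mu_{k}\overline{v}(\pi_{k})=v(a^{\star},\pi^{\circ})\le\overline{v}(\pi^{\circ})$, hence equality. So the whole problem reduces to showing that all on-path posteriors share a common best reply.

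The key step is the identity $\overline{u}(\pi)=\min_{a\in\hat{A}}u(a,\pi)$ for all $\pi$, which I would derive from concavity of $\overline{u}$ together with non-redundancy of the decision-maker. Fix $a\in\hat{A}$. By Remark 7, $\mathrm{int\,}BR^{-1}(a)$ is non-empty and full-dimensional and $a$ is its unique best reply, so $\overline{u}$ coincides with the affine map $u(a,\cdot)$ on this open set. A concave function that agrees with an affine function on a full-dimensional set lies below that affine function everywhere (the affine map is a supporting hyperplane of its hypograph, exactly the argument of Claim 1), so $\overline{u}(\pi)\le u(a,\pi)$ for all $\pi$; the reverse inequality $\overline{u}(\pi)\ge\min_{a}u(a,\pi)$ is immediate because the on-path best reply belongs to $\hat{A}$. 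Consequently every $\alpha\in\bigcup_{\pi'}BR(\pi')$, being a mixture over $\hat{A}$, satisfies $u(\alpha,\pi)\ge\overline{u}(\pi)$ for all $\pi$, while on path $u(\alpha_{k},\pi_{k})=\overline{u}(\pi_{k})$.

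This identity removes all bite from cross-verification: after any history, the decision-maker's sequentially rational action gives a deviating expert at least $\overline{u}$ evaluated at his own true belief, i.e.\ weakly more than the on-path value $\overline{u}(\pi_{k})$, so no detected mismatch can strictly punish. The two-expert incentive problem therefore collapses to a single-expert one, and I would close as in single-sender cheap talk: indifference forces every on-path posterior to lie in the level set where $\overline{u}$ equals the equilibrium value, which by concavity is a face on which $\overline{u}$ is affine; non-redundancy of the experts' actions then pins this affine piece to a single action $a$ with $\overline{u}=u(a,\cdot)$ there, and since on path the decision-maker already plays an action achieving $\overline{u}(\pi_{k})$, this $a$ is a best reply at every on-path $\pi_{k}$. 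Taking $a^{\star}=a$ completes the reduction and yields the decision-maker payoff $\overline{v}(\pi^{\circ})$.

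I expect the genuine obstacle to be this last step: upgrading ``no detected deviation is strictly punishable'' into a statement about \emph{all} weak perfect Bayesian equilibria --- where the experts may choose distinct, arbitrarily correlated experiments and report with partial pooling --- rather than only truthful ones. The clean route is to argue entirely at the level of the induced distribution of posteriors: by the identity the experts' value is the lower envelope $\min_{a}u(a,\cdot)$, so any spread of posteriors weakly lowers their payoff and can be sustained only along a flat face of $\overline{u}$; non-redundancy for the experts isolates the single expert-relevant action on that face, and non-redundancy for the decision-maker upgrades it to a common best reply, delivering $a^{\star}$. Concavity enters only, but crucially, through the supporting-hyperplane step that produces the identity; without it, a decision-maker best reply that is uniformly bad for the experts could serve as a punishment and sustain informative, decision-maker-improving revelation.
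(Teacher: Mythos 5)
Your first two paragraphs are essentially correct, and your supporting-hyperplane derivation of the identity $\overline{u}(\pi)=\min_{a\in\hat{A}}u(a,\pi)$ is a clean route to the paper's key structural result (its Lemma 2, which states $BR(\pi)=\arg\min_{\alpha\in\Delta(\hat{A})}u(\alpha,\pi)$); note that you will still need non-redundancy for the \emph{experts} to upgrade an equality $u(a,\pi)=\overline{u}(\pi)$ into membership $a\in BR(\pi)$, which is the inclusion the paper proves with that assumption and which your final step implicitly uses.

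The genuine gap is in your closing argument, and it is the one you half-suspected. Reporting-stage ``indifference'' cannot force the on-path posteriors onto an affine face of $\overline{u}$ containing $\pi^{\circ}$. What that step actually requires is the lower bound that the experts' equilibrium payoff is at least $\overline{u}(\pi^{\circ})$: by your identity and sequential rationality the payoff equals $\sum_{k}\mu_{k}\overline{u}(\pi_{k})\leq\overline{u}(\pi^{\circ})$, and only Jensen holding \emph{with equality} places the $\pi_{k}$ on a common affine piece through $\pi^{\circ}$, after which a common best reply $a^{\star}$ can be extracted. That lower bound cannot come from reporting-stage incentives, precisely because of your own observation that no mismatch can strictly punish: it equally means no mismatch strictly rewards, so message deviations are payoff-neutral and the reporting subgame admits equilibria in which the experts are strictly hurt and the decision-maker strictly gains. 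Concretely, take three states, three actions $a,b,c$ whose best-reply regions partition the simplex into three sectors meeting at one interior point, $u=-v$ (so $\overline{u}=-\overline{v}$ is concave and both non-redundancy conditions hold), and the prior at the triple point. Let both experts choose the same experiment with three posteriors, one in the relative interior of each pairwise boundary, report truthfully, and let any mismatch $(s,s')$ be met by an action in $BR(\pi_{s})\cap BR(\pi_{s'})$ (nonempty pairwise, empty globally). By the identity every reporting deviation is exactly payoff-neutral, so this is an equilibrium of the reporting subgame; yet there is no common best reply, the decision-maker strictly benefits, and the experts get strictly less than $\overline{u}(\pi^{\circ})$. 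What eliminates this profile --- and what your proposal never invokes --- is a deviation at the \emph{experiment} stage: an expert switches to an uninformative experiment, and one must then show (the paper's Lemmas 3 and 4) that in \emph{every} continuation equilibrium of that off-path subgame the decision-maker's action is a best reply to every on-path belief, so the deviator's payoff is exactly $\overline{u}(\pi^{\circ})$. This experiment-stage anchor is the missing idea: without it, your argument establishes the conclusion only for equilibria of the reporting subgames, not for all weak perfect Bayesian equilibria of $\Gamma(\pi^{\circ},u,v)$.
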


To understand Proposition 2, assume that the experts and the decision-maker have opposing preferences, that is, $u=-v$. In this case, what is best for the decision-maker is worst for the experts, and therefore, $\overline{v}=-\overline{u}$. Moreover, if either of the experts, say expert 1, chooses a uninformative experiment,  an expert's payoff is $u(\pi^{\circ})$ in all equilibria of the ensuing game. This is because if expert 2's experiment produces two signals $s$ and $s'$ such that the set of best-replies at $\pi_s$ differs from the set of best-replies at $s'$, then expert 2 has an incentive to misreport one of the two signals, if not both. The experts cannot credibly communicate any information. Therefore, no expert can obtain less than $\overline{u}(\pi^{\circ})$ in equilibrium. Experts cannot obtain more than $\overline{u}(\pi^{\circ})$ either. Indeed, for every on-path posterior $\pi$, the decision-maker chooses a best-reply in equilibrium, hence an expert's payoff is minimized at $\pi$, i.e., an expert's payoff  is $u^{\min}(\pi):=\min_{a}u(a,\pi)$. The result then follows from the concavity of $u^{\min}$.   Proposition 2 does not require opposing preferences; the logic outlined above extends to all games, where $\overline{u}$ is concave.\medskip

To further illustrate Proposition 2, consider Figure \ref{Fig: no communication example.}. 
For the decision-maker to benefit from cross-verification, the experts would need to choose an experiment, which induces the decision-maker to play different actions after receiving different signals.  However, we cannot sustain such a choice as an equilibrium. An expert would always have an incentive to misreport the realized signal. This is because any action other than the one chosen by the decision-maker improves an expert's payoff, i.e., there is no uniform punishment. \medskip 

The need for the non-redundancy of the experts' actions is again clear. If the experts are totally indifferent, they cannot benefit from persuasion but can provide the decision-maker with perfectly informative signals.  It remains to prove Proposition 2. We do so through a series of lemmata. The following lemma shows that the conflict of interest between the experts and the decision-maker is maximal when the experts cannot benefit from persuasion; that is, the decision-maker's best-replies at belief $\pi$ minimizes the experts' expected payoff. Recall that $\hat{A}$ is the set of actions that are a best response for the decision-maker to some belief.

\begin{figure}
\subfloat[{DM's Preferences. Action $a_{0}$ is optimal for the DM for $\pi\in[0,0.4]$
and $a_{1}$ is optimal for $[0.4,1]$.}]{\includegraphics[scale=0.9]{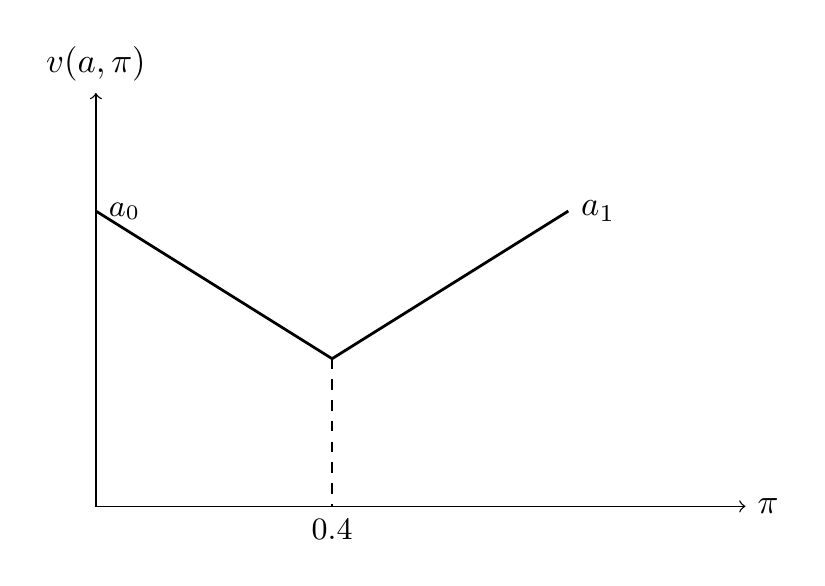}}\qquad{}\subfloat[Expert's Preferences. The solid black lines depict the concave function
$\overline{u}(\pi)$.]{\includegraphics[scale=0.9]{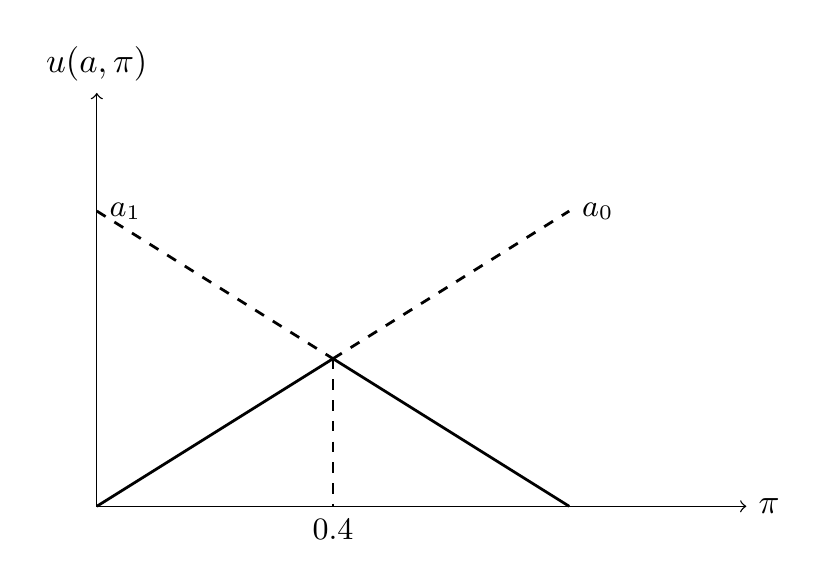}}\caption{No Benefit from Cross-verification or Persuasion.}
\label{Fig: no communication example.}
\end{figure}

\begin{lem}
\label{L:conflicteverywhere}For every $\pi\in\Delta(\Omega)$, $BR(\pi)=\arg\min_{\alpha'\in\Delta(\hat{A})}u(\alpha',\pi)$.
\end{lem}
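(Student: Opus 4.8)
The plan is to prove the set identity in two halves, after reducing it to a statement about the value function $\overline{u}$. Write $u^{\min}(\pi):=\min_{a\in\hat A}u(a,\pi)$, so that $\arg\min_{\alpha'\in\Delta(\hat A)}u(\alpha',\pi)=\Delta(\hat A^{\min}(\pi))$, where $\hat A^{\min}(\pi):=\{a\in\hat A:u(a,\pi)=u^{\min}(\pi)\}$. Recall also that $BR(\pi)=\Delta(br(\pi))$, where $br(\pi)\subseteq\hat A$ is the set of pure best-replies. Since both $BR(\pi)$ and the $\arg\min$ are the probability simplices over finite subsets of $\hat A$, and a simplex determines its vertex set, the lemma is equivalent to the identity $br(\pi)=\hat A^{\min}(\pi)$ for every $\pi$. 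I would obtain this through the intermediate equality $\overline{u}=u^{\min}$ on all of $\Delta(\Omega)$.

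First I would show $\overline{u}=u^{\min}$. The inequality $\overline{u}\ge u^{\min}$ is immediate, since $BR(\pi)\subseteq\Delta(\hat A)$ gives $\overline{u}(\pi)=\max_{\alpha\in BR(\pi)}u(\alpha,\pi)\ge\min_{\alpha'\in\Delta(\hat A)}u(\alpha',\pi)=u^{\min}(\pi)$. For the reverse, fix $a\in\hat A$ and set $\ell_a:=u(a,\cdot)$, an affine map. By non-redundancy of the decision-maker (recorded in the remark above), $BR^{-1}(a)$ is full-dimensional and $a$ is uniquely optimal on its relative interior, so $\overline{u}=\ell_a$ on the nonempty open set $\mathrm{int}\,BR^{-1}(a)$, whose points lie in the relative interior of the simplex. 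Since $\overline{u}$ is concave by hypothesis, $g:=\overline{u}-\ell_a$ is concave and vanishes on that open set; a one-variable concavity argument along rays emanating from an interior point of the set then forces $g\le 0$ everywhere, i.e.\ $\overline{u}\le\ell_a$ on $\Delta(\Omega)$. Minimizing over $a\in\hat A$ yields $\overline{u}\le u^{\min}$, hence equality. As a consequence, any $\alpha\in BR(\pi)$ satisfies $u^{\min}(\pi)\le u(\alpha,\pi)\le\overline{u}(\pi)=u^{\min}(\pi)$, so $BR(\pi)\subseteq\arg\min_{\Delta(\hat A)}u(\cdot,\pi)$; equivalently $br(\pi)\subseteq\hat A^{\min}(\pi)$, which also reads $BR^{-1}(a)\subseteq C_a$, where $C_a:=\{\pi:u(a,\pi)=u^{\min}(\pi)\}$.

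The reverse inclusion $\hat A^{\min}(\pi)\subseteq br(\pi)$, equivalently $C_a\subseteq BR^{-1}(a)$ for all $a$, is the crux. Here I would compare the two polyhedral subdivisions $\{BR^{-1}(a)\}_{a\in\hat A}$ and $\{C_a\}_{a\in\hat A}$ of the simplex. The regions $BR^{-1}(a)$ are full-dimensional, have pairwise disjoint relative interiors, and cover $\Delta(\Omega)$ (non-redundancy of the decision-maker), so $\bigcup_a\mathrm{int}\,BR^{-1}(a)$ is dense; and each $C_a$ is full-dimensional because it contains $BR^{-1}(a)$. The decisive use of the remaining hypothesis is that non-redundancy of the experts makes the affine maps $\ell_a$ pairwise distinct, so $C_a\cap C_{a'}\subseteq\{\ell_a=\ell_{a'}\}$ is lower-dimensional and the interiors $\mathrm{int}\,C_a$ are pairwise disjoint too. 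Using $\mathrm{int}\,BR^{-1}(a)\subseteq\mathrm{int}\,C_a$ together with disjointness of the $C$-interiors, a density argument shows that the dense set $\bigcup_{a'}\mathrm{int}\,BR^{-1}(a')$ meets $\mathrm{int}\,C_a$ only inside $\mathrm{int}\,BR^{-1}(a)$; hence $\mathrm{int}\,BR^{-1}(a)$ is dense in $\mathrm{int}\,C_a$, and taking closures of full-dimensional convex sets gives $C_a=BR^{-1}(a)$. This yields $\hat A^{\min}(\pi)=br(\pi)$ and therefore $BR(\pi)=\arg\min_{\alpha'\in\Delta(\hat A)}u(\alpha',\pi)$.

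I expect the subdivision-matching step to be the main obstacle: it is precisely where non-redundancy of the experts is indispensable (to keep the regions $C_a$ from overlapping in full dimension), and it requires some care in distinguishing interiors relative to the simplex from the regions' own relative interiors, and in using that a full-dimensional convex polytope equals the closure of its interior. By contrast, the concavity-domination argument in the first half is routine once the local representation $\overline{u}=\ell_a$ on $\mathrm{int}\,BR^{-1}(a)$ is in place.
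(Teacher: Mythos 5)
Your proof is correct, and it takes a genuinely different route from the paper's, so a comparison is worthwhile. The paper's proof rests on a pointwise claim (Claim~\ref{L:conflictinterest}): for $a\in\hat{A}$ and $\pi\in\mathrm{int}\,BR^{-1}(a)$, the action $a$ is the \emph{unique} minimizer of $u(\cdot,\pi)$ over $\Delta(\hat{A})$ --- strict preference is derived from a segment joining $\pi$ to points of $\mathrm{int}\,BR^{-1}(a')$ together with concavity of $\overline{u}$, and ties are ruled out by evaluating $u$ at linearly independent beliefs inside the full-dimensional set $BR^{-1}(a)$ and invoking expert non-redundancy; the paper then extends from the union of these interiors to all of $\Delta(\Omega)$ in one sentence, asserting that upper hemi-continuous correspondences coinciding almost everywhere coincide everywhere. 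You instead first prove the global identity $\overline{u}=u^{\min}:=\min_{a\in\hat{A}}u(a,\cdot)$ by concavity domination (a concave function agreeing with the affine $\ell_a$ on a relatively open set lies below $\ell_a$ globally), which gives $BR(\pi)\subseteq\arg\min_{\alpha'\in\Delta(\hat{A})}u(\alpha',\pi)$ at \emph{every} belief in one stroke, and you obtain the reverse inclusion by matching the polyhedral subdivisions $\{BR^{-1}(a)\}_{a\in\hat{A}}$ and $\{C_a\}_{a\in\hat{A}}$, with expert non-redundancy making the interiors of the $C_a$ pairwise disjoint. Both arguments consume the hypotheses at the same points (concavity of $\overline{u}$; full dimension of, and unique optimality on, $\mathrm{int}\,BR^{-1}(a)$; expert non-redundancy), but your version is materially more rigorous at the extension step: the principle the paper invokes is not true for general u.h.c.\ correspondences (take $F\equiv\{0\}$ and $G$ equal to $F$ except at a single point $\pi_{0}$, where $G(\pi_{0})=[0,1]$; both are u.h.c., compact- and convex-valued, and agree almost everywhere), and what actually validates the extension is exactly the structure your subdivision-matching step makes explicit --- each of $BR^{-1}(a)$ and $C_a$ is a closed, convex, full-dimensional set, hence the closure of its interior, so coincidence on a dense set propagates to coincidence of the sets themselves. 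What the paper's route buys in exchange is brevity and a sharper local statement (uniqueness of the minimizer at interior beliefs) obtained directly rather than as a by-product.
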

\begin{proof}[Proof of Lemma \ref{L:conflicteverywhere}] We start by proving the following claim. 
\begin{claim}
\label{L:conflictinterest}$a\in \hat{A}$ and $\pi\in \mathrm{int\,} BR^{-1}(a)$ implies
$\{a\}=\arg\min_{\alpha'\in\Delta(\hat{A})}u(\alpha',\pi)$.
\end{claim}
\begin{proof}[Proof of Claim \ref{L:conflictinterest}] Fix $a\in \hat{A}$ and $\pi\in \mathrm{int\,} BR^{-1}(a)$. We first argue that there 
does not exist $a' \in \hat{A}$ such that  such that $u(a',\pi)<u(a,\pi)$. To the contrary, suppose such $a'$ exists. Pick an arbitrary $\pi'\in \mathrm{int\,} BR^{-1}(a')$. There
exists $\pi''\in \mathrm{int\,} BR^{-1}(a')$ and $\lambda\in(0,1)$ such that
$\pi''=\lambda\pi+(1-\lambda)\pi'$. We obtain 
\begin{eqnarray*}
u(a',\pi'') & = & \overline{u}(\pi'') \\
&\geq & \lambda \overline{u}(\pi)+ (1-\lambda) \overline{u}(\pi')\\
& = & \lambda u(a,\pi)+(1-\lambda)u(a',\pi')\\
 &> & \lambda u(a',\pi)+(1-\lambda)u(a',\pi')\\
 &=& u(a',\pi''),
\end{eqnarray*}
where the first inequality follows from the concavity of $\overline{u}$, the desired contradiction.\medskip  

We now argue that there does not exist $a' \in \hat{A}$ such that  $u(a',\pi)=u(a,\pi)$. From the above, for all $\pi_n \in \mathrm{int\,} BR^{-1}(a)$, 
$u(a',\pi_n)\geq u(a,\pi_n)$. Consider any convex combination $(\lambda_n,\pi_n)_n$  satisfying $\sum_n \lambda_n \pi_n =\pi$, $\pi_n \in \mathrm{int\,} BR^{-1}(a)$ for all $n$, $\lambda_n>0$ for all $n$, and the $\pi_n$ being linearly independent.  Such a convex combination exists since $BR^{-1}(a)$ has full dimension. If $u(a',\pi)=u(a,\pi)$, then 
\[ u(a',\pi)= \sum_{n} \lambda_n u(a',\pi_n) \geq \sum_n \lambda_n u(a,\pi_n)=u(a,\pi) = u(a',\pi), \]
i.e.,  $u(a',\pi_n)=u(a,\pi_n)$ for all $n$,  a
contradiction with the condition of no redundant actions for the experts. Therefore, for all $a' \neq a$, $u(a',\pi)>u(a,\pi)$, which completes the proof of the claim.
\end{proof}

From Claim \ref{L:conflictinterest}, the statement is true for all $\pi$ such that 
 $\pi\in \mathrm{int\,} BR^{-1}(a)$ for some $a \in \hat{A}$.
Since $BR$ and $\arg\min_{\alpha'\in \hat{A}}u(\alpha',\pi)$ are upper
hemi-continuous correspondences, which coincide almost everywhere
(in Lebesque measure), they coincide everywhere.\end{proof}

We now derive an immediate implication of Lemma \ref{L:conflicteverywhere}. We first introduce some additional notation. Recall that following the choice of experiments $(\sigma_1,\sigma_2)$, we have a proper sub-game. We are interested in analyzing the play in these sub-games. To ease notation, we drop the dependence on 
$(\sigma_1,\sigma_2)$ and write $\pi(m_{1},m_{2})\in\Delta(\Omega)$ for the decision-maker's belief after observing the messages $(m_1,m_2)$. Similarly, we write $\alpha(m_{1},m_{2})$ for the decision-maker's equilibrium
reply. Notice that $\alpha(m_{1},m_{2})\in \Delta(\hat{A})$ because this action is a  best response to belief  $\pi(m_{1},m_{2})$. Finally, let $\mathbb{P}$
denote the probability distribution over signals, messages and actions induced by the prior and the strategy profile, conditional on the experiments 
$(\sigma_1,\sigma_2)$. At an equilibrium, sequential rationality requires the decision-maker to choose a best-reply to his belief.  Fix an equilibrium,  an on-path profile of messages $(m_1,m_2)$, and its associated belief $\pi(m_1,m_2)$. Since all best-replies of the decision-maker to $\pi(m_1,m_2)$ minimize the experts' payoffs, no expert must be able to induce the decision-maker to choose an action outside $BR(\pi(m_1,m_2))$ by changing his message to $m'_1$. 

\begin{lem}
\label{independencefrommessages}If $\mathbb{P}(m_{i},m_{j})>0$,
then for all $m_{i}'$, $\alpha(m_{i}',m_{j})\in BR(\pi(m_{i},m_{j}))$.
\end{lem}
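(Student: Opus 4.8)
The plan is to argue by contradiction through a single \emph{unconditional} deviation by expert $i$, and then to disentangle the resulting averaged incentive inequality message-by-message using Lemma \ref{L:conflicteverywhere}. Fix the equilibrium, the on-path profile $(m_i,m_j)$ with $\mathbb{P}(m_i,m_j)>0$, and an arbitrary alternative message $m_i'$. I would consider the deviation in which expert $i$ transfers, for every signal realization, all the probability he places on $m_i$ onto $m_i'$, playing as in equilibrium otherwise. Since expert $i$ cannot observe expert $j$'s message, this is the natural admissible deviation, and after cancelling every unaffected realization the optimality of the equilibrium strategy yields
\[
\mathbb{E}\big[u(\alpha(m_i',M_j),\omega)\,\mathbf{1}_{\{M_i=m_i\}}\big]\le \mathbb{E}\big[u(\alpha(m_i,M_j),\omega)\,\mathbf{1}_{\{M_i=m_i\}}\big],
\]
where $M_i,M_j$ denote the random equilibrium messages and the expectation is taken over signals, messages and states. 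The first step is to record this inequality; the second is to note that, because the deviation changes only the report and not the physical signals, the true posterior on the event $\{M_i=m_i,M_j=m_j'\}$ is the on-path belief $\pi(m_i,m_j')$, whatever expert $i$ chooses to report.

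Next I would expand both expectations over the value $m_j'$ taken by $M_j$. Using the previous observation, the inequality becomes
\[
\sum_{m_j'}\mathbb{P}(m_i,m_j')\,u\big(\alpha(m_i',m_j'),\pi(m_i,m_j')\big)\le \sum_{m_j'}\mathbb{P}(m_i,m_j')\,u\big(\alpha(m_i,m_j'),\pi(m_i,m_j')\big),
\]
where the sum runs over the finitely many $m_j'$ with $\mathbb{P}(m_i,m_j')>0$. The crucial input is now Lemma \ref{L:conflicteverywhere}: since $\alpha(m_i,m_j')\in BR(\pi(m_i,m_j'))$, each right-hand summand equals the minimum of $u(\cdot,\pi(m_i,m_j'))$ over $\Delta(\hat A)$, whereas $\alpha(m_i',m_j')\in\Delta(\hat A)$ forces each left-hand summand to be at least that same minimum. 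Hence the left-hand side dominates the right-hand side term by term, while the displayed inequality asserts the reverse.

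Combining the two facts forces equality in every term with $\mathbb{P}(m_i,m_j')>0$; in particular, taking $m_j'=m_j$ gives $u(\alpha(m_i',m_j),\pi(m_i,m_j))=\min_{\alpha\in\Delta(\hat A)}u(\alpha,\pi(m_i,m_j))$, so $\alpha(m_i',m_j)$ minimizes the experts' payoff at $\pi(m_i,m_j)$, and Lemma \ref{L:conflicteverywhere} identifies this minimizer set with $BR(\pi(m_i,m_j))$, which is the claim. The step I expect to be the main obstacle is precisely that expert $i$'s incentive constraint is unavoidably an \emph{average} over expert $j$'s messages $M_j$, so on its own it says nothing about the single profile $(m_i',m_j)$; the resolution is that the decision-maker's payoff-minimizing behavior, guaranteed by Lemma \ref{L:conflicteverywhere}, makes the averaged inequality tight in each term, which is exactly what lets me extract the pointwise conclusion at $m_j'=m_j$.
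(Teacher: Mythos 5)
Your proof is correct and is essentially the paper's own argument: both rest on Lemma \ref{L:conflicteverywhere} identifying $BR(\pi)$ with the set of expert-payoff minimizers over $\Delta(\hat{A})$, and both squeeze the averaged incentive constraint of a wholesale message deviation against the pointwise inequalities that this lemma delivers. The only cosmetic differences are that the paper argues by contradiction using the deviation ``always send $m_i'$'' (obtaining a strictly profitable deviation when some $\alpha(m_i',m_j')\notin BR(\pi(m_i,m_j'))$ at a positive-probability profile), whereas you argue directly with the deviation ``send $m_i'$ in place of $m_i$'' and extract term-by-term equality.
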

\begin{proof}[Proof of Lemma \ref{independencefrommessages}]
Without loss of generality, let $i=1$, $j=2$. The proof is by contradiction. Assume that there exists
$m_{1},m_{1}', m_{2}'$ such that  $\alpha(m_{1}',m_{2}') \notin BR(\pi(m_{1},m_{2}'))$. 

From Lemma \ref{L:conflicteverywhere}, $u(\alpha(m_{1}',m_{2}'),\pi(m_{1},m_{2}'))>u(\alpha(m_{1},m_{2}'),\pi(m_{1},m_{2}'))$.
The equilibrium payoff to expert 1 is 
\[
\sum_{(\tilde{m}_{1},\tilde{m}_{2})}\mathbb{P}(\tilde{m}_{1},\tilde{m}_{2})u(\alpha(\tilde{m}_{1},\tilde{m}_{2}),\pi(\tilde{m}_{1},\tilde{m}_{2})).
\]
If expert 1 deviates by always sending the message $m_1'$, his expected payoff is: 
\[
\sum_{(\tilde{m}_{1},\tilde{m}_{2})}\mathbb{P}(\tilde{m}_{1},\tilde{m}_{2})u(\alpha(m'_{1},\tilde{m}_{2}),\pi(\tilde{m}_{1},\tilde{m}_{2})).
\]
We now argue that the deviation is profitable, the required contradiction. 

From Lemma \ref{L:conflicteverywhere}, we have that  $u(\alpha(\tilde{m}_{1},\tilde{m}_{2}),\pi(\tilde{m}_{1},\tilde{m}_{2}))\leq u(\alpha(m'_{1},\tilde{m}_{2}),\pi(\tilde{m}_{1},\tilde{m}_{2}))$
for all $(\tilde{m}_{1},\tilde{m}_{2})$. Moreover, there exists $(m_1,m_2)$ such that the inequality is strict and
$\mathbb{P}(m_1,m_2)>0$. Thus, the deviation is profitable. 
\end{proof}

The next lemma shows that if any  expert chooses an uninformative experiment, then the experts' and the decision-maker's payoff in the ensuing equilibrium is equal to their payoff at their prior belief.

\begin{lem}
\label{L:uninformativeisguaranteed} Let $(\sigma_1,\sigma_2)$ be a profile of experiments. 
If either $\sigma_{1}$ or $\sigma_2$ is an uninformative
experiment, then the experts' equilibrium payoff is $\overline{u}(\pi^{\circ})$ and the decision-maker's equilibrium payoff is 
$\overline{v}(\pi^{\circ})$  in the ensuing sub-game.
\end{lem}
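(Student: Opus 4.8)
The plan is to prove Lemma~\ref{L:uninformativeisguaranteed} by treating the informative and uninformative experts separately, exploiting the fact that an uninformative experiment reveals nothing. Without loss of generality, suppose $\sigma_1$ is the uninformative experiment, so that expert~1's signal carries no information about $\omega$ and his posterior coincides with $\pi^{\circ}$. The key consequence is that expert~1's message $m_1$ is statistically independent of the state; hence from the decision-maker's perspective, any information transmitted on path must come through expert~2's message $m_2$. The goal is to show that, despite this, no information can be credibly transmitted, so that every on-path action is a best reply to $\pi^{\circ}$ and the payoffs collapse to $\overline{v}(\pi^{\circ})$ and $\overline{u}(\pi^{\circ})$.

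First I would establish the \emph{lower bound} on the experts' payoff. Since expert~1's signal is uninformative, expert~1 can guarantee himself $u^{\min}$-free play by babbling: if he sends a constant message $m_1$ regardless of his (uninformative) signal, then by Lemma~\ref{independencefrommessages} the decision-maker's reply $\alpha(m_1,m_2)$ to any message of expert~2 lies in $BR(\pi(\tilde m_1, m_2))$ for the on-path beliefs. More carefully, I would argue that by deviating to a constant message, expert~1 induces a single posterior, and the decision-maker's best reply to that posterior is $\overline{u}(\pi^{\circ})$-optimal from the experts' standpoint; combined with Lemma~\ref{L:conflicteverywhere}, which says every best reply minimizes the experts' payoff, I would show the experts cannot be pushed below $\overline{u}(\pi^{\circ})$. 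The cleanest route is: expert~1 can always send a fixed message, which makes his report uninformative, so the decision-maker's belief after any $m_2$ is the Bayesian update from $\pi^{\circ}$ using only $m_2$; since expert~2's report is cheap talk and all decision-maker replies minimize the experts' payoff, expert~2 cannot credibly separate signals that induce different best-reply sets, and the resulting payoff is pinned to $\overline{u}(\pi^{\circ})$ by concavity of $\overline{u}$.

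Next I would establish the \emph{upper bound}. For any on-path message profile $(m_1,m_2)$ with $\mathbb{P}(m_1,m_2)>0$, sequential rationality forces $\alpha(m_1,m_2)\in BR(\pi(m_1,m_2))$, and by Lemma~\ref{L:conflicteverywhere} this action minimizes the experts' payoff at $\pi(m_1,m_2)$, so the experts receive $u^{\min}(\pi(m_1,m_2))$, where $u^{\min}(\pi):=\min_{\alpha\in\Delta(\hat A)}u(\alpha,\pi)$. Taking expectations over on-path profiles and using that the induced posteriors average back to $\pi^{\circ}$ (a martingale/splitting identity), concavity of $u^{\min}$ — which follows from concavity of $\overline{u}$ together with Lemma~\ref{L:conflicteverywhere} identifying $\overline{u}=u^{\min}$ — gives the experts at most $\overline{u}(\pi^{\circ})$. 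Combining the two bounds yields exactly $\overline{u}(\pi^{\circ})$ for the experts. Finally, for the decision-maker, I would invoke the hypothesis that $\overline{u}$ is concave together with the non-redundancy of actions to conclude that no information is transmitted on path, so every on-path posterior equals $\pi^{\circ}$; hence the decision-maker plays a best reply to $\pi^{\circ}$ and earns $\overline{v}(\pi^{\circ})$.

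The main obstacle I anticipate is making rigorous the claim that expert~2 cannot credibly transmit information when expert~1 is uninformative. The subtlety is that the decision-maker's off-path beliefs and the precise best-reply selection matter: I need to rule out equilibria where expert~2 separates signals inducing distinct decision-maker actions. The resolution hinges on Lemma~\ref{L:conflicteverywhere}'s strong conclusion that \emph{all} best replies minimize the experts' payoff, which is what makes any separating report unprofitable for expert~2 — whatever information he reveals only lets the decision-maker hurt the experts more. I would therefore need to argue carefully that if expert~2's strategy induced two distinct best-reply sets at two on-path posteriors, expert~2 could profitably pool by reporting the message that induces the more favorable action at both realizations, contradicting equilibrium; this pooling argument, combined with the concavity-driven bounds above, is the technical crux of the lemma.
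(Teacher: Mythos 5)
Your proposal assembles the right tools---Lemma \ref{independencefrommessages} to show the informative expert cannot profitably separate, and Lemma \ref{L:conflicteverywhere} to convert the decision-maker's best replies into expert-payoff minimizers---and your upper bound (concavity of $u^{\min}=\overline{u}$ plus the fact that on-path posteriors average to $\pi^{\circ}$) is correct; it is essentially the argument the paper deploys later, in Lemma \ref{eq-payoff}. But two of your steps fail as written. First, the lower bound. Babbling by the \emph{uninformative} expert does not ``induce a single posterior'': his message never moved beliefs in the first place, and the informative expert's messages still generate a nondegenerate distribution of on-path posteriors $\pi(m_2)$. Nor can concavity ``pin'' the payoff to $\overline{u}(\pi^{\circ})$: concavity delivers exactly $\sum_{m_2}\mathbb{P}(m_2)\,\overline{u}(\pi(m_2))\leq\overline{u}(\pi^{\circ})$, i.e., the upper bound again, not the matching lower bound. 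The missing idea---which is the heart of the paper's proof---is a common-best-reply-plus-linearity argument: since beliefs do not depend on the uninformative expert's message, Lemma \ref{independencefrommessages} implies that any fixed on-path action $a^{*}$ lies in $BR(\pi(m_2))$ for \emph{every} on-path $m_2$ simultaneously; because the posteriors average to $\pi^{\circ}$ and $v(a^{*},\cdot)$ is affine, it follows that $a^{*}\in BR(\pi^{\circ})$; then Lemma \ref{L:conflicteverywhere} (all best replies give the experts the same, minimal, payoff) yields the experts' payoff as an exact identity, $\sum_{m_2}\mathbb{P}(m_2)\,u(a^{*},\pi(m_2))=u(a^{*},\pi^{\circ})=\overline{u}(\pi^{\circ})$, with no bounding needed at all.

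Second, your argument for the decision-maker rests on the false claim that ``no information is transmitted on path, so every on-path posterior equals $\pi^{\circ}$.'' Nothing forces this: the informative expert may reveal information in equilibrium as long as all on-path posteriors remain in a region where the decision-maker's optimal action does not change; non-redundancy does not rule this out, since $BR^{-1}(a)$ has full dimension. What is actually true---and sufficient---is the weaker statement that every on-path action is a best reply to every on-path posterior. The decision-maker's payoff then follows from the same fixed action $a^{*}$: $\overline{v}(\pi(m_2))=v(a^{*},\pi(m_2))$ at each on-path posterior, so his expected payoff is $\sum_{m_2}\mathbb{P}(m_2)\,v(a^{*},\pi(m_2))=v(a^{*},\pi^{\circ})=\overline{v}(\pi^{\circ})$ by linearity. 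In short, your skeleton is repairable, but the repair consists precisely of replacing the ``single posterior / concavity'' reasoning with this direct computation, which is the paper's proof.
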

\begin{proof}[Proof of Lemma \ref{L:uninformativeisguaranteed}] 
Without loss of generality, assume that $\sigma_2$ is uninformative. Since the experiments are observed by the decision-maker, 
this implies that $\pi(m_1,m_2)$ is independent of $m_2$. (Recall that we require the beliefs to be consistent with the experiments.)
To ease the notation, we drop the dependence on $m_2$. \medskip

Together with Lemma \ref{independencefrommessages}, this
implies that for all $(m_1,m_2)$ such that $\mathbb{P}(m_{1},m_{2})>0$,  $\alpha(m'_{1},m_{2})\in BR(\pi(m_{1}))$
for all $m'_{1}$. That is, $\alpha(m'_{1},m_{2})$ is a best-reply to all posterior beliefs $\pi(m_1)$. Note that since 
$\mathbb{P}(m_{1},m_{2})>0$, the message $m_1$ has strictly positive probability. It follows that $\alpha(m'_{1},m_{2})$ is a best-reply to $\pi^{\circ}$ (as the prior is a convex combinations of the posteriors). Since it is true for all $(m'_1,m_2)$, the decision-maker payoff is $\overline{v}(\pi^{\circ})$. 
\medskip 

Finally, since Lemma \ref{L:conflicteverywhere}
states that the experts are indifferent among all best-replies of the decision-makers, an expert's payoff is $\overline{u}(\pi^{\circ})$.\end{proof}

We now conclude the proof.

\begin{lem}\label{eq-payoff}
In any equilibrium of the cheap-talk game, the experts' payoff is $\overline{u}(\pi^{\circ})$, and the decision-maker's payoff is 
$\overline{v}(\pi^{\circ})$.
\end{lem}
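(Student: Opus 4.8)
The plan is to determine the experts' payoff first and then read off the decision-maker's payoff from it, using throughout the identity furnished by Lemma~\ref{L:conflicteverywhere}. The first thing I would record is that Lemma~\ref{L:conflicteverywhere} gives, for every $\pi$, $\overline{u}(\pi)=\min_{\alpha\in\Delta(\hat{A})}u(\alpha,\pi)$: since every best reply of the decision-maker minimizes the experts' payoff, all elements of $BR(\pi)$ deliver the common (minimal) value $\overline{u}(\pi)$. Hence $\overline{u}$ is the lower envelope of the affine maps $u(a,\cdot)$, $a\in\hat{A}$, and it is concave by the maintained hypothesis. For the experts' payoff I would then argue both bounds. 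For the lower bound, either expert can deviate in the first stage to an uninformative experiment; by Lemma~\ref{L:uninformativeisguaranteed} every continuation equilibrium of the resulting sub-game pays him $\overline{u}(\pi^{\circ})$, so his equilibrium payoff is at least $\overline{u}(\pi^{\circ})$. For the upper bound, on the equilibrium path the decision-maker best-replies at each information set, so at an on-path profile $(m_{1},m_{2})$ each expert receives $\overline{u}(\pi(m_{1},m_{2}))$; averaging and using $\sum_{(m_1,m_2)}\mathbb{P}(m_{1},m_{2})\pi(m_{1},m_{2})=\pi^{\circ}$ together with the concavity of $\overline{u}$ (Jensen) gives a payoff of at most $\overline{u}(\pi^{\circ})$. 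The two bounds yield the experts' payoff $\sum_{(m_1,m_2)}\mathbb{P}(m_{1},m_{2})\overline{u}(\pi(m_{1},m_{2}))=\overline{u}(\pi^{\circ})$.

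The decision-maker's payoff then follows from a nonnegativity argument. Fix a pure $a^{*}\in BR(\pi^{\circ})$; by the identity above $u(a^{*},\pi^{\circ})=\overline{u}(\pi^{\circ})$. Define $g(\pi):=u(a^{*},\pi)-\overline{u}(\pi)$. Since $\overline{u}$ is the pointwise minimum of the maps $u(a,\cdot)$, we have $g\geq 0$ everywhere, while $g(\pi^{\circ})=0$. Moreover, by linearity of $u(a^{*},\cdot)$ in the belief and the martingale property of posteriors,
\[\sum_{(m_1,m_2)}\mathbb{P}(m_{1},m_{2})\,g(\pi(m_{1},m_{2}))=u(a^{*},\pi^{\circ})-\sum_{(m_1,m_2)}\mathbb{P}(m_{1},m_{2})\overline{u}(\pi(m_{1},m_{2}))=\overline{u}(\pi^{\circ})-\overline{u}(\pi^{\circ})=0,\]
where the last equality uses the experts' payoff just computed. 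As $g\geq 0$ and the weights $\mathbb{P}(m_{1},m_{2})$ are strictly positive on the path, $g(\pi(m_{1},m_{2}))=0$ at every on-path profile, i.e. $u(a^{*},\pi(m_{1},m_{2}))=\overline{u}(\pi(m_{1},m_{2}))$. By Lemma~\ref{L:conflicteverywhere} this means $a^{*}\in BR(\pi(m_{1},m_{2}))$, so $v(a^{*},\pi(m_{1},m_{2}))=\overline{v}(\pi(m_{1},m_{2}))$. Summing,
\[\sum_{(m_1,m_2)}\mathbb{P}(m_{1},m_{2})\overline{v}(\pi(m_{1},m_{2}))=\sum_{(m_1,m_2)}\mathbb{P}(m_{1},m_{2})v(a^{*},\pi(m_{1},m_{2}))=v(a^{*},\pi^{\circ})=\overline{v}(\pi^{\circ}),\]
which is exactly the decision-maker's equilibrium payoff.

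The main obstacle is the decision-maker's half: a priori the logic of Blackwell suggests that any information leakage should help him, so one must show that a single best reply to the prior stays optimal at every posterior reached on the path. The crucial leverage is the identity $\overline{u}=\min_{\alpha}u(\alpha,\cdot)$ from Lemma~\ref{L:conflicteverywhere}, which makes $g=u(a^{*},\cdot)-\overline{u}$ convex and nonnegative; combined with the already-established expert payoff equality (which forces the prior-average of $g$ to vanish), this pins $g$ to zero on the support and transfers optimality of $a^{*}$ from the experts' problem to the decision-maker's. I would take care that $a^{*}\in\hat{A}$ (it is, being a best reply to $\pi^{\circ}$), so that Lemma~\ref{L:conflicteverywhere} applies, and that the posteriors indeed average to $\pi^{\circ}$; the non-redundancy and concavity hypotheses enter only through Lemmas~\ref{L:conflicteverywhere} and~\ref{L:uninformativeisguaranteed}, which I invoke as black boxes. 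Notably, Lemma~\ref{independencefrommessages} is needed only to establish Lemma~\ref{L:uninformativeisguaranteed} and is not invoked directly here.
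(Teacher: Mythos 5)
Your proposal is correct and follows essentially the same route as the paper: the experts' payoff is pinned down from below by a deviation to an uninformative experiment (Lemma~\ref{L:uninformativeisguaranteed}) and from above by Lemma~\ref{L:conflicteverywhere} plus concavity of $\overline{u}$ and the martingale property of posteriors, and the decision-maker's payoff then follows by showing that a best reply to the prior remains optimal at every on-path posterior. Your nonnegative-function argument for that last step (forcing $g=u(a^{*},\cdot)-\overline{u}$ to vanish on the path) is just the direct, contrapositive-free form of the paper's argument by contradiction, and your observation that Lemma~\ref{independencefrommessages} is not needed here beyond its role in Lemma~\ref{L:uninformativeisguaranteed} is accurate.
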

\begin{proof}[Proof of Lemma \ref{eq-payoff}]
Fix any equilibrium of the cheap-talk game. From 
Lemma \ref{L:uninformativeisguaranteed}, the payoff to any expert must at least be  $\overline{u}(\pi^{\circ})$. We now argue that it cannot be higher. If $(\sigma^*_1,\sigma^*_2)$ are the experiments chosen at the first stage, then in the ensuing sub-game, an expert's payoff is:
\begin{align*}
\sum_{(m_{1},m_{2})}\mathbb{P}(m_{1},m_{2})u(\alpha(m_{1},m_{2}),\pi(m_{1},m_{2})) & =\sum_{(m_{1},m_{2})}\mathbb{P}(m_{1},m_{2})\min_{a\in A}u(a,\pi(m_{1},m_{2}))\\
 & \leq\min_{a\in A}u\left(a,\sum_{(m_{1},m_{2})}\mathbb{P}(m_{1},m_{2})\pi(m_{1},m_{2})\right)\\
 & =\min_{a\in A}u\left(a,\pi^{\circ}\right)=\bar{u}(\pi^{\circ}).
\end{align*}
(Recall that $\mathbb{P}$, $\alpha$ and $\pi$ depend on $(\sigma_1^*,\sigma_2^*)$,  but to ease notation, we do not explicitly write the dependence.)\medskip 

Finally, we argue that the decision-maker cannot get a payoff higher than $\overline{v}(\pi^{\circ})$ either. Indeed, for the decision-maker to obtain a higher payoff, there must exist an action
$a\in BR(\pi^{\circ})$ and a message profile $(m_{1},m_{2})$
such that $\mathbb{P}(m_1,m_2)>0$ and $a\notin BR(\pi(m_{1},m_{2}))$. This, however, would imply
that an expert's equilibrium payoff is strictly less than $u(a,\pi^{\circ})$, a contradiction with an expert's equilibrium payoff being equal to
$\bar{u}(\pi^{\circ})=\min_{a'\in \hat{A}}u\left(a',\pi^{\circ}\right)$. \medskip

The latter assertion follows from Lemma \ref{independencefrommessages}, which states that 
$u(a,\pi(m_{1},m_{2}))>u(\alpha(m_{1},m_{2}),\pi(m_{1},m_{2}))$
and $u(a,\pi(m_{1}',m_{2}'))\geq u(\alpha(m_{1}',m_{2}'),\pi(m_{1}',m_{2}'))$
for all pairs of messages  $(m_{1}',m_{2}')$ with $\mathbb{P}(m_1',m_2')>0$.
\end{proof}

\section{Conclusion}
In this paper, we studied the effects of cross-verification on the decision-maker's and experts' payoffs. Clearly, cross-verification is not the sole reason for soliciting advice from multiple experts. Consulting a diverse set of experts with different opinions, specializations, preferences can provide a decision-maker with insights about the merits of different aspects of an issue. In fact, a decision-maker may be able to perfectly learn a multidimensional state by consulting experts about different dimensions. However, consulting experts that have information about different dimensions of a decision reduces the scope for cross-verification since cross-verification is most effective when experts' information is highly correlated. Moreover, as we demonstrated in this paper, the experts have an incentive to facilitate cross-verification by acquiring correlated information. This points to an interesting tension that can inform future research on committee design. 

\bibliographystyle{ecta}
\bibliography{references}

\end{document}